\documentclass[12pt]{article}
\usepackage[utf8]{inputenc}
\usepackage{lmodern}
\usepackage[T1]{fontenc}
\usepackage{mathptmx}
\usepackage{eufrak}

\usepackage{graphicx}
\usepackage{amsthm,amsmath,amssymb}

\usepackage{enumitem}
\usepackage{setspace}
\usepackage{nicefrac}
\usepackage[square,comma]{natbib}
\usepackage{bibentry}

\usepackage{caption}
\usepackage{bm}

\usepackage{tikz-cd}
\usetikzlibrary{positioning,calc}
\usetikzlibrary{arrows, arrows.meta, positioning, automata, decorations.pathreplacing, decorations.markings, patterns}
\usepackage{float}
\usepackage{forest}
\usepackage{mathtools}
\usepackage{hyperref}
\hypersetup{
	colorlinks   = true, %Colours links instead of ugly boxes
	urlcolor     = blue, %Colour for external hyperlinks
	linkcolor    = blue, %Colour of internal links
	citecolor   = blue %Colour of citations
}

\usepackage[pdftex,color]{changebar}
\definecolor{cbcolor}{rgb}{0,0,1}
\cbcolor{cbcolor}

\usepackage[defaultcolor=cyan]{changes} % this package helps in tracking changes
\definechangesauthor[color=red]{B}

\usepackage[a4paper,margin=3.2cm]{geometry}
\usepackage{titlesec}
\titleformat*{\section}{\large\bfseries}
\titleformat*{\subsection}{\bfseries}
\newtheorem{mydef}{Definition}
\newtheorem{theorem}{Theorem}
\newtheorem{lemma}{Lemma}[theorem]
\newtheorem{proposition}[lemma]{Proposition}

\makeatletter
\newtheorem*{rep@theorem}{\rep@title}
\newcommand{\newreptheorem}[2]{%
  \newenvironment{rep#1}[1]{%
    \def\rep@title{#2 \ref{##1}}%
    \begin{rep@theorem}%
  }{%
    \end{rep@theorem}%
  }%
}
\makeatother
\newreptheorem{theorem}{Theorem}

\begin{document}

\def\thefootnote{\fnsymbol{footnote}}
	
\begin{center}
{\Large\bf Choice via Ownership Rules}\footnote{We thank Franz Dietrich, Federico Echenique, Sean Horan, Marco Mariotti, Alessandro Pavan, John Quah, and seminar audiences for their useful comments. 
%We are grateful to the referees for their insights. 
Financial support under PRIN 20222Z3CR7 ``Nudging under limited attention'' is acknowledged. 
%An Online appendix is available at \url{https://mlicalzi.github.io/research/MACG-Online-Appendix.pdff}.
}
\end{center}
	
\setcounter{footnote}{0}
\def\thefootnote{\alph{footnote}}
	
\vspace*{.1cm}
\begin{center}
	\begin{tabular}{ccc}
		{\sc Bhavook Bhardwaj} & $\quad$ & {\sc Marco LiCalzi} \\
		%{\small Dept.\ of Economics} & & {\small Venice School of Management}\\
		{\small Université de Montréal} & &  {\small Università Ca' Foscari Venezia} \\
		{\small\tt bhavook.bhardwaj@umontreal.ca} & & {\small\tt licalzi@unive.it}\\
	\end{tabular}
\end{center}
\vspace*{0.2cm}
\begin{center}
	(October 2026)
\end{center}
\vspace*{0.4cm}
	
{\small
\noindent {\bf Abstract.} 
We consider a model of collective choice where for each menu there is an agent who is accountable for the choice made. Specifically, we assume that two agents (individuals or selves) allocate choice tasks over different menus using an ownership rule: the owner of the decision over a menu picks what maximizes his own preference and is accountable for it. We characterize three distinct ownership rules under different (nested) choice environments. 
 %We then examine the question of identification in these different choice environments.
		
\medskip\par\noindent
{\bf Keywords:\/}  menu-dependent choice, collective choice, accountability, bounded rationality
		
\medskip\par\noindent
{\bf JEL Classification Numbers:\/} D10, D71, D90
}
	
\setcounter{footnote}{0}
\def\thefootnote{\arabic{footnote}}
	
\bigskip
	
\section{\sc \textbf{Introduction}}\label{sec:introduction}

Since the seminal contribution by \cite{kalai2002rationalizing}, it has been widely known that the presence of multiple binary relations may explain choice behavior that departs from classical rationality. This multiplicity may refer to selves, individuals and, more generally, distinct criteria or rationales by which a choice may be made. The literature has explored many different choice procedures based on multiple binary relations.\footnote{These include: sequential applications \citep{manzini2007sequentially,horan2016simple}; multiple maximizations \citep{kalai2002rationalizing,eliaz2011choosing}; shortlisting \citep{cherepanov2013rationalization}; aggregation~\citep{ambrus2015rationalising}; and bargaining between two agents~\citep{de2012reason}.}

We consider choice procedures that assign each menu to one of the binary relations. The latter ``owns'' the decision for that menu. For a prominent example, consider how joint leadership by the Roman consuls worked. Among many provisions to promote collegiality (\emph{collegialitas}) and parity (\emph{par potestas}), the consuls were assigned\footnote{\ This could be done by  lot (\emph{sortitio}) or by mutual agreement (\emph{comparatio}).} distinct spheres of responsibility (\emph{provinci\ae}); see~\citet{Pinapolo2011}. Intuitively, each consul was responsible for handling the choice problems within his remit or, in our jargon, owned the decisions for his \emph{provincia}.  

We introduce ownership rules that for each menu attribute the choice to one agent between two; then the owner of the menu makes a decision that maximizes his preference. The owner of the menu is accountable for his decision, because he selects the best option according to him.

We emphasize that, in our abstract setup, the agent need not be an individual. There may be two selves (e.g., present vs.\ future) competing to own a consumption/saving decision, or there may be two rationales competing to own the process leading to a choice. Therefore, we say simply that an ownership rule assigns a preference to each menu, and leave it unspecified to whom or what the preference refers. Four examples may be useful to highlight different ways in which an ownership rule may apply. 

\smallskip
\noindent
\textbf{Example 1.} A child and a parent have their own preferences about the child's welfare. When the menu has only low stakes, the child owns the decision and gets to choose his favorite item; when high stakes are present, the parent overrides her child and the decision is made using her preferences.\footnote{\ For an interesting variant, focus on the parent's welfare and swap the child with a piece of GenAI.} Somewhat similarly, in the model of knowledge hierarchies by~\citet{garicano2000}, standard problems are solved by subordinates, and exceptional problems by their supervisors.

\smallskip
\noindent
\textbf{Example 2.} An M-form organization is arranged into distinct, semi-autonomous units, often called divisions. Many choice problems fall into the domain of exactly one division, whose preferences are used to make a choice. 

\smallskip
\noindent
\textbf{Example 3.} An individual has two selves (rapid and contemplative), who rely respectively on fast and slow thinking. She perceives all binary choices as simple, and lets her rapid self sort them out; but when there are more than two options, she calls on her contemplative self to make choices.  

\smallskip
\noindent
\textbf{Example 4.} An agent can count on a pair of distinct rationales to justify a decision. When she settles on a rationale for a given menu, she wants to be consistent with it and chooses whatever that rationale recommends as best within the menu. In an abstract setup, the agent goes for the rationale whose first-best recommendation is ranked higher by the other rationale.

\smallskip
These examples point to clear rules for which agent, among those involved, gets to choose. One may imagine a two-step choice procedure, where the agent facing the menu activates a binary relation and then maximizes that binary relation to make her choice. This sort of menu-dependent dictatorship is very different from the aggregation of binary relations across all menus studied in~ \cite{ambrus2015rationalising}. It also departs from the sequential application of shortlisting and choosing initiated by~\cite{manzini2007sequentially}, which has spawned a large literature (e.g., \cite{au2011sequentially}, \cite{masatlioglu2012revealed}, \cite{manzini2012categorize}, \cite{horan2016simple}). 

Our objective is to study the observable implications of menu-dependent ownership in the setup of abstract choice theory. As the examples suggest, sometimes the context itself is a source of bounded rationality because some features (e.g., size, domain, complexity) interfere with  maximizing a single preference over all choice problems. Moreover, as the same context may affect agents differently, our work focuses on revealed preferences.

At a structural level, our approach is close to \cite{kalai2002rationalizing}. However, they do not consider the possibility that the menu could drive the selection among multiple rationales. The ownership rules introduced in this paper model precisely such menu dependence, and thus can be seen as a next step from their work. 

Our results also relate to early work by \cite{aizerman_malishevski_1981}, who show that path independent choice correspondences are those generated by selecting the maximizers of a fixed set of linear orders. More recently, \cite{eliaz2011choosing} and \cite{de2012reason} have studied a few important cases where at most two alternatives are chosen using one or two linear orders. Our general notion of an ownership rule provides a unified way to study such choice procedures.  \cite{manzini2018dual} studies the \textit{Dual Random Utility model} (D-RUM), where the stochastic choice function is generated by probability distribution(s) over two linear orders. Their menu-dependent version of the D-RUM can be seen as a stochastic analogue of an ownership rule.

In this paper, we operationalize ownership rules over two binary relations in three different (nested) choice environments, commonly studied in abstract choice theory. These environments differ in the rationality of the conditions on the underlying binary relations (asymmetry, transitivity, etc.) and in their observability requirements (single- or multi-valued choice). 

An important limitation is that we restrict the range of an ownership rule to only two binary relations. We conjecture that the general case of $n$ binary relations is much harder to tackle. Indeed, \cite{apesteguia2010computational} and \cite{demuynck2011computational} show that the general problem of rationalizing choice behavior is NP-complete, whereas the main case of polynomial computability (for a specific model of sequential rationalization)  in~\citet[Theorem~3]{demuynck2011computational} requires at most two binary relations.

The next section introduces the model and formalizes the concept of an ownership rule. Section~3 provides axiomatic characterizations of choice behavior in three different choice environments, and Section 4 concludes.
 
\section{ \sc \textbf{Model and Ownership Rules}}\label{sec:setup}

Let $X$ be a finite set with cardinality at least~$2$. A \emph{menu} is a non-empty subset of $X$ that has at least two alternatives. Let $\mathcal{P}(X)$ denote the collection of menus from $X$. Following \cite{cherepanov2013rationalization}, a \emph{preference} is an asymmetric binary relation $\succ$ on $X$; and a connected\footnote{\ A binary relation $\succ$ is connected if, for any $x \ne y$, either $y \succ x$ or $x \succ y$. This is similar to completeness, but it does not impose $x \succ x$.}, transitive preference is an \emph{order}. Given a preference $\succ$, the \emph{maximal set} $m \left(A, \succ \right) \coloneq \{x \in A : \neg (y \succ x) \ \text{for any} \ y \in A\}$ for a menu $A$ may be empty and need not be a singleton; when it exists, we denote the unique element in $m \left(A, \succ \right)$ by $\max \left(A, \succ \right)$. A \emph{choice correspondence} $c$ on $\mathcal{P}(X)$ assigns a non-empty\footnote{\ Trivially, $c( \{ x \}) = x$; for simplicity, we omit discussing single-element menus.} subset $c(A) \subseteq A$ to every menu $A$. We call $c$ a \emph{choice function} if it is single-valued for any menu $A$. 

Imagine a pair of agents, endowed with preferences $\succ_1$ and $\succ_2$. Intuitively, the novel idea introduced in this paper is that, for each menu, the pair designates one of the two as a proxy and lets the proxy choose. The agents can be individuals (e.g., parent and child) or selves (e.g., fast or slow mode of thinking): what matters is that the choice for each menu is imputed to an agent $i=1,2$ who ``owns'' the decision and has a preference that uniquely picks $\max \left(A, \succ_i \right)$.

Formally, consider a pair of preferences $O= \{\succ_1, \succ_2\}$ on $X$. An \emph{ownership rule} (over menus) $\omega : \mathcal{P}(X) \to O$ associates a menu in $\mathcal{P}(X)$ to a preference in $O$. The choice behavior generated via an ownership rule acts as a two-step process: first, it calls on $\omega$ to select which preference to deploy; second, it chooses as if maximizing it.

\begin{mydef}\label{def1} 
A choice function $c$ on $\mathcal{P}(X)$ respects \emph{single ownership} if there exists a pair of preferences $O= \{\succ_1, \succ_2\}$ on $X$ and an ownership rule $\omega : \mathcal{P}(X) \to O$ such that 
$$c(A) = \max \left(A, \omega (A) \right)$$
for any $A \in \mathcal{P}(X)$.
\end{mydef}

Figure~\ref{fig1-cvo} unpacks Definition~\ref{def1} by juxtaposing the observables of a choice function and the model of choice via single ownership. For any menu $A$, the model delivers a choice $c(A)$ by identifying the owner of the decision and by imposing that this owner acts as a maximizer.
\begin{figure}[ht!]
\[
\begin{array}{lcr}
  % Column 1: The nested array with its own brace
  \left.
  \begin{array}{ccccc}
A & \multicolumn{3}{c}{\xrightarrow{\hspace{5cm}}} & c(A) \\
(menu) & & & & (choice) \\
\end{array}
  \hspace{.8cm} \right\}  & \text{Observables} \\[8mm] \hline \\
  \left.
  \begin{array}{ccccc}
A & \xrightarrow{\hspace{1cm}} & \omega(A) & \xrightarrow{\hspace{1cm}} & \max \left(A, \omega (A) \right)  \\
(menu) & & (owner) & & (maximization)
\end{array}
  \right\} 
  & \text{Model}
\end{array}
\]
\caption{Choice via ownership}\label{fig1-cvo}
\end{figure}

Many examples in the introduction can be formalized by suitable restrictions on the ownership rule. For instance, consider a situation where two agents (individuals or selves) assign choice ownership according to the size of the menu: one agent handles the small problems, and the other deals with the large ones.\footnote{When the complexity of a choice problem is aligned with menu size (e.g.,\ \cite{tyson2008cognitive}), one agent deals with simple problems and the other with harder ones.} This is consistent with the \textit{monotonicity} of the ownership rule with respect to set inclusion: once a menu grows large enough that ownership switches to the other agent, it can no longer return to the first one. See the \textit{Weak WARP} property in \cite{manzini2007sequentially} for a related notion. 

Figure~\ref{fig3} illustrates on the left a (color-coded) monotone ownership rule: blue for $\succ_1$, red for $\succ_2$. Monotonicity implies that a cut partitions the lattice $\mathcal{P}(X)$ into two ownership domains.
\begin{figure}[htbp]
	\begin{center}
		\begin{tikzpicture}
			\matrix (A) [matrix of nodes, row sep=1.2cm]
			{
				$\color{red}\{a,b\}$ & $\color{blue}\{a,c\}$ & $\color{blue}\{b,c\}$ \\
				$\color{red}\{a\}$ & $\color{red}\{b\}$ & $\color{blue}\{c\}$ \\
				& $\emptyset$ \\
			};
			\path (A-1-1)--(A-1-2) node[above=1.2cm] (link) {$\color{blue}\{a,b,c\}$};
			
			\foreach \i in {1,...,3}
			\draw (link.south) -- (A-1-\i.north);
			\foreach \i/\j in {1/2, 3/2, 2/1, 1/1, 3/3, 2/3}
			\draw (A-1-\i.south)--(A-2-\j.north);
			\foreach \i/\j in {1/2, 2/2, 3/2}
			\draw (A-2-\i.south)--(A-3-\j.north);
		\end{tikzpicture}
\hfil
		\begin{tikzpicture}
			\matrix (A) [matrix of nodes, row sep=1.2cm]
			{
				$\color{red}\{a,b\}$ & $\color{red}\{a,c\}$ & $\color{red}\{b,c\}$ \\
				$\color{red}\{a\}$ & $\color{red}\{b\}$ & $\color{red}\{c\}$ \\
				& $\emptyset$ \\
			};
			\path (A-1-1)--(A-1-2) node[above=1.2cm] (link) {$\color{blue}\{a,b,c\}$};
			
			\foreach \i in {1,...,3}
			\draw (link.south) -- (A-1-\i.north);
			\foreach \i/\j in {1/2, 3/2, 2/1, 1/1, 3/3, 2/3}
			\draw (A-1-\i.south)--(A-2-\j.north);
			\foreach \i/\j in {1/2, 2/2, 3/2}
			\draw (A-2-\i.south)--(A-3-\j.north);
		\end{tikzpicture}		
		\caption{A monotone ownership rule (left) and a size-monotone rule (right).}\label{fig3}
	\end{center}
\end{figure}
The stronger notion of \textit{size monotonicity} allows for switches in ownership only when the menu reaches a given size. As shown on the right in Figure~\ref{fig3}, this levels the cut to a horizontal boundary. 

Consider now a parent--child pair, where the parent owns the decision if the menu contains any high-stake item, and lets the child take his pick over small-stake menus. We say that an ownership rule preserves the \emph{primacy} of $i$ if there is a set $A^* \subset X$ such that $A \cap A^* \ne \emptyset$ implies $\omega(A) = \succ_i$. Intuitively, $i$ gets to make the choice whenever some alternative from $A^*$ is available. Primacy generalizes the revisitation of the dinner example from~\cite{Luce1957} in \citet[p.~2482]{kalai2002rationalizing}.

Moving further, one can think of a joint ownership when each of the two agents designates their best choice; e.g., the ``top and top'' procedure in \cite{eliaz2011choosing} that for each menu selects the best element from two given linear orders. To capture the notion of \emph{choice via joint ownership}, we extend the (single) ownership in Definition~\ref{def1}.
Let $ \mathcal{O} = \{\{ \succ_1\}, \{\succ_2\}, \{\succ_1, \succ_2\}\}$ be the collection of possible \emph{ownership arrangements}. 

\begin{mydef}\label{def2} 
A choice correspondence $c$ on $\mathcal{P}(X)$ respects \emph{joint ownership} if there exists a pair of preferences $O= \{\succ_1, \succ_2\}$ on $X$ and an ownership arrangement $\omega : \mathcal{P}(X) \rightarrow \mathcal{O}$ such that 	
$$c(A) = \bigcup_{\succ_i \in \omega(A)} \max (A, \succ_i) $$
for any $A \in \mathcal{P}(X)$.
\end{mydef}

An example of choice via a joint ownership rule is the following. Agent 1 is an engineer with expertise in a domain $D_1 \subset X$ and Agent 2 is a lawyer with expertise in a distinct domain $D_2 \subset X$, with $D_1\cap D_2 = \emptyset$. For any menu $A$ within the domain of expertise $D_i$, the decision is owned uniquely by $i$ and $c(A) $ maximizes $\succ_i$; otherwise, the decision is jointly owned and $c(A)$ contains the maximizers for $\succ_1$ and $\succ_2$.

Finally, Example 4 in the Introduction describes a choice correspondence where the ownership of a decision depends not only on the menu but also on the restrictions of the underlying preferences to the menu. In our abstract setup, this can be viewed as a form of context-dependent ownership. Our last definition encompasses the case where the set of owners for a decision depends both on the menu and on the two binary relations themselves. 

\begin{mydef}
A choice correspondence $c$ respects \emph{general ownership} if there exists a pair of preferences $O= \{\succ_1, \succ_2\}$ on $X$ and an ownership arrangement $\Omega : \mathcal{P}(X) \times O \to \mathcal{O}$ such that for all $A \in \mathcal{P}(X)$, 
$$c(A) = \bigcup_{\succ_i \in \Omega(A,O)} \max \left(A, \succ_i \right) $$
\end{mydef}

\section{\sc \textbf{Choice Environments}}\label{sec:classic}

We demonstrate the structure of choice behavior conforming to ownership rules in three nested environments, commonly studied in abstract choice theory. We provide revealed preference characterizations for each of them. 

Table~\ref{table-1} labels the three nested environments and summarizes their assumptions on the two binary relations and on the observability of choice.
\begin{table}[ht!]
\begin{center}
\begin{tabular}{l|c|c}
& binary relations & choice map \\
Environment & $\succ_1, \succ_2$ & $c (\cdot)$ \\ \hline
1. \emph{Minimal} & preferences & function \\
2. \emph{Standard} & orders & function \\
3. \emph{Rich} & orders & correspondence \\ \hline
\end{tabular}
\end{center}
\caption{Three choice environments}\label{table-1}
\end{table}

\subsection{\sc \textbf{Minimal choice environment}}

The minimal choice environment assumes that $c(\cdot)$ is a choice function and that the two binary relations $\succ_1, \succ_2$ are preferences (i.e., asymmetric). This last assumption is weaker than usual, but for instance it features prominently in the model of choice by sequential application of multiple criteria studied by \cite{manzini2007sequentially}. 

We study the testable implications of a monotonic ownership rule in this minimal choice environment. Monotonicity fits those situations where one agent is in charge over small menus, and the other over large menus. When the size of a menu increases the complexity of making a decision, we can say that Agent~$i$ handles the simple choices and Agent~$j$ the difficult ones. 

\begin{mydef}
An ownership rule is \emph{monotone} if, for any three menus $A \subset B \subset C$, $\omega (A) \neq \omega(B)$ implies $\omega(B)=\omega(C)$.
\end{mydef}

A monotonic ownership rule partitions the set of menus into two blocks, respectively called \emph{upper set} and \emph{downset}. An upper set $U \subseteq \mathcal{P}(X)$ satisfies upward closure: if $A$ is in $U$, then any superset $B \supset A$ is in $U$. Similarly, a downset satisfies downward closure. Intuitively, the upper set and the downset match the antonyms small/large or simple/difficult. 

Our goal is to characterize the choice functions generated by a monotone ownership rule. It is well known that contraction consistency (or Sen's condition $\alpha$) is a necessary condition for the rationalizability of a choice function. We propose a relaxation of condition $\alpha$.

\medskip
\noindent
\emph{\textbf{Weak-$\alpha$}}: For all menus $B \subset B^\prime \subseteq A \subset A^\prime$, if $c(A^\prime)\in A$ and $c(A) \ne c(A^\prime)$, then $c(B^\prime ) \in B$ implies $c(B)=c(B^\prime )$.

\medskip

This weaker condition requires that, if contracting $A^\prime$ to $A$ violates consistency, then further contractions of $A$ do not violate consistency. Put simply, a violation of consistency may occur at most once along a chain of menus ordered by set inclusion. Because a violation reveals an ownership change, monotonicity requires no further ownership flips along the same chain. Our first result shows that Weak-$\alpha$ characterizes monotone choice functions in the minimal environment.

\begin{theorem}\label{thm1}
A choice function respects monotone ownership in a minimal environment if and only if it satisfies weak-$\alpha$.
\end{theorem}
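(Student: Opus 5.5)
The plan is to prove necessity directly from the definitions, and sufficiency by building two asymmetric relations out of the observed choices on a carefully chosen partition of $\mathcal{P}(X)$ into a downset and an upper set. The first observation, used in both directions, is that two menus sharing a preference behave consistently. Suppose $A\subset A'$, $\omega(A)=\omega(A')=\succ$ and $c(A')\in A$. Nothing in $A'$ beats $c(A')=\max(A',\succ)$, so nothing in $A$ does either. Hence $c(A')\in m(A,\succ)=\{\max(A,\succ)\}$, and so $c(A)=c(A')$. Consequently, any \emph{violation} (a pair $B\subset A$ with $c(A)\in B$ and $c(B)\neq c(A)$) reveals that $\omega(B)\neq\omega(A)$.

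\emph{Necessity.} Let $c$ respect monotone ownership, and take $B\subset B'\subseteq A\subset A'$ with $c(A')\in A$ and $c(A)\neq c(A')$. By the observation, $\omega(A)\ne\omega(A')$. For any $E\subset A$, apply monotonicity in contrapositive form to the chain $E\subset A\subset A'$: this gives $\omega(E)=\omega(A)$. So $\omega(B)=\omega(A)$, and also $\omega(B')=\omega(A)$ (trivially if $B'=A$). Then $\omega(B)=\omega(B')$, and the observation yields $c(B)=c(B')$ whenever $c(B')\in B$.

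\emph{Sufficiency.} Assume weak-$\alpha$. Let $D$ be the set of menus $E$ such that $E\subseteq B$ for the lower end $B$ of some violation $(B,A)$, and let $U=\mathcal{P}(X)\setminus D$. By construction $D$ is a downset and $U$ an upper set, so assigning owner $1$ on $D$ and owner $2$ on $U$ gives a monotone rule. Define $\succ_1$ by $c(A)\succ_1 x$ whenever $A\in D$ and $x\in A\setminus\{c(A)\}$, and define $\succ_2$ analogously on $U$. If these relations are asymmetric, then for $A$ in the block of $i$ nothing beats $c(A)$ while $c(A)$ beats every other element of $A$. Hence $\max(A,\succ_i)=c(A)$, and the theorem follows. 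It remains to check asymmetry: there are no menus $A,B$ in the same block with $x,y\in A\cap B$, $c(A)=x\neq y=c(B)$.

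\emph{Asymmetry on $D$.} Take $A\in D$, so $A\subseteq B_0$ for some violation $(B_0,A_0)$. Weak-$\alpha$ applied to the chain $\{x,y\}\subset A\subseteq B_0\subset A_0$ forces $c(\{x,y\})=c(A)=x$. Symmetrically, using $B$, it forces $c(\{x,y\})=y$, a contradiction.

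\emph{Asymmetry on $U$.} I expect this to be the main obstacle, since weak-$\alpha$ speaks only about chains, so the argument must be routed through a common superset. Consider $A\cup B$. If $B\subseteq A$, then $(B,A)$ is itself a violation, since $x=c(A)\in B$ and $c(B)=y\ne x$; this puts $B\in D$, a contradiction. The case $A\subseteq B$ is symmetric, so assume $A,B\subset A\cup B$ strictly. Let $z=c(A\cup B)$. If $z\in A$ and $z\neq x$, then $(A,A\cup B)$ is a violation and $A\in D$, a contradiction; hence $z\in A$ forces $z=x$. Similarly, $z\in B$ forces $z=y$. Since $z\in A\cup B$, either $z=x$ or $z=y$. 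But $x$ and $y$ both lie in $A\cap B$, so in either case $z$ belongs to both $A$ and $B$, which requires $z=x$ and $z=y$ simultaneously. This contradicts $x\neq y$ and completes the plan.
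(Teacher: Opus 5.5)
Your proof is correct and follows essentially the same route as the paper. You use the same partition into the downset generated by the lower ends of $\alpha$-violations and its complementary upper set, the same revealed-preference relations on each block, a union argument for asymmetry on the upper block, and weak-$\alpha$ along a chain for asymmetry on the lower block. Your version is slightly more streamlined: defining $D$ through arbitrary rather than one-element violations removes the paper's chain-reduction step and its case split on $\mathcal{C}^*=\emptyset$, and you prove asymmetry on $U$ directly instead of appealing to a rationalization result for domains closed under unions.
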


Proofs are relegated to the appendix, but here is the main intuition. Weak-$\alpha$ partitions the set of menus into two blocks, respectively closed under union and under intersection. The standard condition $\alpha$ holds separately for each block. But a choice function that satisfies condition $\alpha$ on a domain that is closed under unions (or intersections) has a rationalizing preference over that domain. 

Note that the preference over the domain closed under unions is also acyclic, and thus can be extended to an order. This fits well with the parent-child paradigm, where one expects the first to exhibit more rationality than the second: the choices of the parent, who decides over larger menus, are rationalized by an order while the choices of the child over smaller menus by a preference.

\subsection{\sc \textbf{Standard choice environment}}
	
We now turn to a standard choice environment, assuming orders instead of preferences but keeping a single-valued choice function. Consider a rule that for all binary menus assigns ownership to the same agent, and is otherwise unrestricted. For instance, under a dual selves interpretation, this may occur when choice over simple menus with two items is made under fast thinking (in Kahneman's parlance). Formally, we say that the ownership rule is \emph{simple} if $\omega(A)= \omega(B)$ for all menus $A, B$ of size~2.

We characterize choice behavior via simple ownership in a standard environment. Two preliminary axioms are needed. The first one requires that choice over binary menus rules out cycles of length~3. We omit the obvious quantifiers for readability and, when possible, we keep the names from~\cite{de2012reason} for comparability.

\medskip
\noindent
\emph{\textbf{Pairwise consistency (PC)}}: If $c(\{x,y\})=x$ and $c(\{y,z\})=y$, then $c(\{x,z\})=x$.
\medskip

The second axiom relaxes the \textit{Strong Axiom of Revealed Preference (SARP)}: the revealed preference need not be acyclic over the full domain but only across specific menus of size at least~3. Before stating it, we need the following definition.

\begin{mydef}
An alternative $x$ \emph{prevails} on another alternative $y$, and we write $x \vartriangleright y$, if there exist a sequence of alternatives $x_1,\dots,x_{k+1}$ with $x_1=x$ and $x_{k+1}=y$, and a sequence of menus $A_1,\dots,A_k$, such that for each $i \in \{1,\dots,k\}$:
\begin{enumerate}
\item $x_i,x_{i+1}\in A_i$ and $c(A_i)=x_i$; but
\item $c(\{x_i,z\})=z\,$ for some $z \in A_i$, with $z \neq x_i$.
	\end{enumerate}
\end{mydef}

Condition 2 of the definition reveals that the choice in each $A_i$ contradicts the order used for binary menus. Therefore, all menus $A_i$ are owned by another preference. We require that prevalence is asymmetric to ensure that this second preference is acyclic, and thus can be extended to an order.

\medskip
\noindent
\emph{\textbf{Asymmetric prevalence (AP)}}:
If $x \vartriangleright y$, then $\neg \left( y \vartriangleright x\right)$.
\medskip

Our next result characterizes choice behavior under a simple ownership rule. The independence of the two axioms is shown in Appendix~\ref{app:thm2}.

\begin{theorem}\label{thm2}
A choice function respects simple ownership in a standard environment if and only if it satisfies PC and AP.
\end{theorem}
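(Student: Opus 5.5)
The plan is to identify $\succ_1$ with the order revealed on binary menus and $\succ_2$ with a linear extension of prevalence. The ownership rule then follows directly from the data: a menu goes to $\succ_1$ exactly when its choice is consistent with binary choices.

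\emph{Necessity.} Suppose $c(A)=\max(A,\omega(A))$ with $\succ_1,\succ_2$ orders and $\omega$ simple. Without loss of generality all binary menus are owned by $\succ_1$, so $c(\{x,y\})=x$ iff $x\succ_1 y$.
\begin{enumerate}
\item PC follows immediately from transitivity of $\succ_1$.
\item For AP, take a prevalence chain $x_1,\dots,x_{k+1}$ with menus $A_1,\dots,A_k$. In each $A_i$ there is $z\in A_i$ with $c(\{x_i,z\})=z$, i.e.\ $z\succ_1 x_i$. Hence $x_i\neq\max(A_i,\succ_1)$, so $\omega(A_i)=\succ_2$ and $x_i=\max(A_i,\succ_2)$.
\item It follows that $x_i\succ_2 x_{i+1}$ whenever $x_{i+1}\neq x_i$. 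Transitivity then gives $x\succ_2 y$ whenever $x\vartriangleright y$ with $x\neq y$.
\item Asymmetry of $\succ_2$ yields AP.
\end{enumerate}
Prevalence is meant between distinct alternatives, so I will read the definition as requiring $x_i\neq x_{i+1}$. Otherwise $x\vartriangleright x$ would hold trivially and AP would fail vacuously.

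\emph{Sufficiency.}
\begin{enumerate}
\item Define $x\succ_1 y$ iff $c(\{x,y\})=x$ for $x\ne y$. This relation is asymmetric and connected because $c$ is a choice function.
\item $\succ_1$ is transitive by PC. Any failure of transitivity would be a 3-cycle on binary menus, and after relabeling PC forbids every such cycle. So $\succ_1$ is an order.
\item The relation $\vartriangleright$ is transitive, since prevalence chains can be concatenated.
\item AP makes $\vartriangleright$ asymmetric, and in particular irreflexive: $x\vartriangleright x$ together with AP applied to $y=x$ gives a contradiction.
\item Hence $\vartriangleright$ is a strict partial order on $X$. By Szpilrajn's theorem it extends to an order $\succ_2$.
\end{enumerate}

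Next define the ownership rule. Set $\omega(A)=\succ_1$ if $c(\{c(A),z\})=c(A)$ for every $z\in A\setminus\{c(A)\}$, and $\omega(A)=\succ_2$ otherwise.
\begin{enumerate}
\item Every binary menu satisfies the first condition trivially, so $\omega$ is simple.
\item In the first case $c(A)$ beats every other element of $A$ under $\succ_1$, so $c(A)=\max(A,\succ_1)$.
\item In the second case let $x=c(A)$. There is $z\in A$ with $c(\{x,z\})=z$, so condition 2 of the prevalence definition holds with the single menu $A_1=A$. Hence $x\vartriangleright y$ for every $y\in A\setminus\{x\}$, via a one-step chain.
\item Therefore $x\succ_2 y$ for all such $y$, and $c(A)=\max(A,\succ_2)$.
\end{enumerate}

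The main obstacle is confirming that $\vartriangleright$ really is a strict partial order, since this is the only place where AP does any work. One step is checking that concatenating two prevalence chains again satisfies both clauses of the definition; this holds because the clauses are imposed menu by menu, so concatenation preserves them. The other is the irreflexivity issue noted above. Once these are settled, everything else is a direct verification.
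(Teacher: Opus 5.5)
Your proof is correct and is essentially the paper's argument. You read $\succ_1$ off binary menus (transitive by PC), take $\succ_2$ as a Szpilrajn extension of prevalence, set $\omega(A)=\succ_1$ exactly when $c(A)=\max(A,\succ_1)$, and handle the remaining menus with one-step chains; extending $\vartriangleright$ directly amounts to the paper's extension of the transitive closure of its one-step relation $\succ_2^\ast$. One small tidy-up: AP only quantifies over distinct pairs (by the paper's convention, and your necessity argument derives it only for $x\neq y$), so you cannot apply it with $y=x$; instead get irreflexivity by splitting a would-be cycle $x\vartriangleright x$ at $x_2\neq x$ into $x\vartriangleright x_2$ and $x_2\vartriangleright x$.
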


Return to Example~3 in the introduction, where menus are partitioned into simple (size at most 2) and difficult (size at least 3). This special case can be characterized similarly to the above: one has to drop Condition 2 and require each menu $A_i$ to have size at least 3 in the definition of prevalence. 

\subsection{\sc \textbf{Rich choice environment}}

The rich choice environment assumes orders but allows set-valued choices, so that $c$ is now a correspondence. In a simple case like the ``top and top'' procedure, the (joint) ownership rule is a constant function mapping each menu to the pair $\{ \succ_1, \succ_2 \}$. More generally, in the engineer-lawyer example the ownership rule depends only on the menu at stake.

Richer and more interesting rules may depend not just on the menu but on other contextual cues. In our abstract setup, these come from how the two binary relations relate to each other. Consider for instance the fallback bargaining solution axiomatized in~\cite{de2012reason}. Given two orders and a menu $A$, this solution picks the alternative(s) that maximizes the worst rank\footnote{\ Given an order $ \succ$, the \emph{rank} of an alternative $x$ in a menu $A$ is the number of alternatives in $A$ that $\succ$ ranks below $x$. This is called \emph{score} in~\cite{de2012reason} or Borda count in social choice.} attributed by either order within $A$ itself. (In case of a tie, it picks both.) The choice depends on the menu and on the orders. Their procedure does not fit our setup, because it can deliver choices that no agent ranks first; however, it shows how ownership rules may depend on other issues. 

Next, we study an ownership rule that depends on the menu and on the two orders, formalizing Example~4. Observe that the rank of an alternative in $A$ under $\succ_i$ is its Borda count, denoted $b_i (x, A)$; for instance, if a menu $A$ has $n$ elements, the top alternative $x$ for $\succ_1$ has rank (and Borda count) $b_1 (x, A) = n-1$ while $b_2(x,A)$ may be any integer up to $n-1$. 

The \emph{Borda ownership} rule picks the top alternative that has the higher sum of Borda counts over the two orders. Equivalently, because at least one of the two Borda counts for a top alternative in $A$ is $|A|-1$, this rule selects the top alternative that is ranked higher in the other order. (In case of a tie, it picks both.) Formally, for any menu $A$, denote by $x_i^*$ its top alternative according to $\succ_i$; then for any $A$ the Borda ownership rule is 
$$\omega^B (A) = \begin{cases} \{\succ_1\} & \text{if } b_{2}(x_1^*, A) > b_1(x_2^*, A) \\
	\{\succ_2\} & \text{if } b_{2}(x_1^*, A) < b_1(x_2^*, A) \\
	\{\succ_1, \succ_2\} & \text{if } b_{2}(x_1^*, A) = b_1(x_2^*, A)
	\end{cases}$$
Intuitively, given that the choice has to be a top alternative, Borda ownership is assigned to maximize how the non-owner ranks the owner's choice. Note that the Borda ownership is unique, unless the non-owner ranks the two top alternatives equally. 

Compare Borda ownership with the fallback bargaining solution: both describe procedures to make a choice when there are two agents with potentially different orders. In the spirit of bargaining, the fallback solution seeks a compromise that takes everyone into account. Borda ownership, instead, fits a requirement of accountability: given that someone has to be responsible for making a choice, it assigns ownership to minimize the non-owner's discontent.

We characterize choice behavior via Borda ownership in a rich environment. Some concepts are inspired by similar notions in~\citet[Section~5]{de2012reason}; however, the bulk of the argument is quite different. Their characterization requires eight axioms compared to our seven; three axioms are shared.

We begin with two notions based on choices over pairs and triples. An alternative $x$ is revealed to (pairwise) \emph{dominate} $y$, denoted $x \gg y$, if $c(\{x,y\})=\{x\}$. Under Borda ownership, pairwise dominance occurs when both orders rank $x$ above $y$ and thus may be read as (Pareto) dominance. We also say that $x$ and $y$ are \emph{incomparable}, denoted $x || y$, if neither dominates the other. Under Borda ownership, $x \gg y$ if and only if $x \succ_i y$ for both $i$. 

An alternative $y$ is revealed to be \textit{intermediate} between $x$ and $z$, denoted $x \langle y \rangle z$, if the choice from any pair in the triplet\footnote{\ For brevity, we call \emph{triplet} any 3-element set without assuming that its elements are ordered.} is the pair itself and $c(\{x,y,z\})= \{x,z\}$. Intuitively, $y$ is intermediate when $x$ and $z$ are ranked differently but they sandwich $y$ in either order. Dominance coincides with Pareto superiority in~\cite{de2012reason}, while our intermediate differs from their compromise, because the latter uses $c(\{x,y,z\})= \{y\}$. 

Our first three axioms are borrowed from~\cite{de2012reason}. The next three are variants of their axioms adapted to our setting. The last axiom is new. The first axiom is pairwise consistency (PC) from the previous Section, which is equivalent to transitivity for the dominance relation.

\medskip
\noindent
\emph{\textbf{Pairwise Consistency (PC)}}: If $x \gg y$ and $y \gg z$, then $x \gg z$.
\medskip

The second axiom requires that expanding a menu with a dominated alternative $y$ confirms the original choices that dominate $y$. \cite{de2012reason} introduce it to model the attraction effect \citep{huber1982adding}.

\medskip
\noindent	
\textbf{\textit{Attraction (ATT)}}: Let $x \in c(A)$ and $y \notin A$ be such that $x \gg y$. Then $c(A \cup \{y\}) = c(A) \cap \left\{z \in A \mid z \gg y \right\}$.
\medskip

The third axiom states that no dominated alternative can be chosen; that is, the choice correspondence is a selection from the set of non-dominated alternatives.

\medskip
\noindent	
\emph{\textbf{Efficiency (EFF)}}: If $x \in c(A)$ then there exists no $y \in A$ such that $y \gg x$.
\medskip
	
The next axiom states that, when choice for every pair of alternatives in a triplet is non-resolute, then choice over the triplet itself selects a doubleton, revealing that the third alternative is an intermediate. This is a variant of the axiom named Existence of a Compromise in~\cite{de2012reason}: where they ask for a singleton, we require a doubleton. Their singleton is directly revealed as the compromise; our doubleton indirectly reveals the intermediate. 

Intuitively, they seek to promote the choice of a compromise, while we want to avoid the choice of an intermediate. This intuition applies to all the next three axioms.

\medskip
\noindent	
\emph{\textbf{Existence of an Intermediate (EI)}}: If the choice out of any pair in $\{ x,y,z \}$ is the pair itself, then $c(\{x,y,z\})$ is a doubleton.
\medskip
	
If $y$ is an intermediate between $x$ and $z$, one order puts $x$ above $y$ and the other order puts $z$ above $y$; thus $y$ cannot be a top alternative in a menu where $x,z$ are present. The fifth axiom states that if $y$ is intermediate between $x$ and $z$, denoted $x \langle y \rangle z$, then $y$ cannot be chosen when the menu contains $x,z$.  This is a variation on the No Better Compromise axiom. 

\medskip
\noindent		
\emph{\textbf{No Better Intermediate (NBI)}}: If $x,z \in c(A)$ and $x \langle y \rangle z$, then $y \notin c(A)$.
\medskip
	
The sixth axiom	 concerns any intermediate $y$ between $x$ and $z$. If a fourth alternative $w$ is incomparable to $y$, then $y$ must also be an intermediate between $w$ and one of $x,z$. To exemplify, suppose $x\langle y \rangle z$ because $x \succ_1 y \succ_1 z$ and $z \succ_2 y \succ_2 x$. If neither $w \gg y$ nor $y \gg w$, then (say)\ $w \succ_1 y$ but $y \succ_2 w$: we require $w \succ_1 y \succ_1 z$ and $z \succ_2 y \succ_2 w$. This is a variation of the Overcoming a Compromise axiom.

\medskip
\noindent
\emph{\textbf{Staying Intermediate (SI)}}: If $x\langle y \rangle z$ and $c(\{y,w\}) = \{y,w\}$ for some $w \notin \{x,y,z\}$, then $x \langle y \rangle w$ or $z \langle y \rangle w$.%
\medskip

Our final one is an expansion axiom, when alternatives are added to a menu. It deals with two cases: (i) adding two dominated alternatives, or (ii) adding an intermediate. The first part requires that adding to a menu $A$ two more alternatives that are respectively dominated by distinct alternatives in $c(A)$ does not change the chosen items (read below for the exact statement). The second part imposes that adding to a menu an intermediate between alternatives in $c(A)$ does not change the chosen items. In both cases, actually, the addition does not change the top alternatives and leaves the Borda counts for the non-owners equal between themselves. Therefore, if Borda ownership has to apply, the original choice cannot change.

\medskip
\noindent	
\emph{\textbf{Expansion (EXP)}}:
\begin{enumerate}[itemsep=0pt, topsep=0pt, parsep=0pt]
\item[(i)] if $x,y \in c(A)$ and $z, w \notin A$ are such that $x \gg z$ but $x || w$ whereas $y \gg w$ but $y || z$, then $x,y \in c(A \cup \{z, w\})$;
\item[(ii)] for any $x,z \in c(A)$ and $y \notin c(A)$, if $x \langle y \rangle z$ then $x,z \in c(A \cup \{y\})$.
\end{enumerate}
\medskip

Our next result characterizes choice behavior under a Borda ownership rule. 

\begin{theorem}\label{thm3}
A choice correspondence respects Borda ownership in a rich environment if and only if it satisfies PC, ATT, EFF, EI, NBI, SI and EXP.
\end{theorem}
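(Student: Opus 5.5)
Necessity is a direct check of each axiom against Borda ownership for two orders $\succ_1,\succ_2$. For sufficiency we take $c$ satisfying the seven axioms, build two orders from the choices on pairs and triples, and then show that Borda ownership with these orders reproduces $c$ on every menu.

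\emph{Necessity.} Under Borda ownership, on a pair $\{x,y\}$ the Borda counts of the non-owners are both $0$. Hence $c(\{x,y\})=\{x\}$ exactly when both orders rank $x$ above $y$, and $c(\{x,y\})=\{x,y\}$ otherwise. This gives PC from transitivity and EFF, since a Pareto-dominated item is never a top. On a triplet whose pairs are all incomparable, the orders are exact reverses on it. Then $b_2(x_1^*)=b_1(x_2^*)=0$, so both tops are chosen, which gives EI and identifies $x\langle y\rangle z$ with ``$y$ is sandwiched in both orders.'' NBI follows because a sandwiched item is never a top when $x,z\in A$. For SI, we case-split on where $w$ sits relative to $y$ in each order. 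For ATT, adding $y$ dominated by a chosen $x$ keeps the tops. It raises the non-owner count of each top by one exactly when that top dominates $y$, which yields the stated intersection. For EXP, we verify that both non-owner counts rise by the same amount: by one each in case (i), and by one each in case (ii).

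\emph{Sufficiency: constructing the orders.} The plan is to read the two orders from the data. Dominance $\gg$ is transitive by PC and asymmetric, and it must be the intersection of $\succ_1$ and $\succ_2$. The incomparable pairs $x\|y$ must be split into ``$x\succ_1 y,\ y\succ_2 x$'' versus the reverse, with one consistent orientation.
\begin{enumerate}
\item Fix a reference incomparable pair and orient it.
\item Propagate orientations through triplets. If $x\langle y\rangle z$, then $x$ and $z$ lie on opposite sides of $y$ in $\succ_1$. If a triplet of incomparables has $c(\{x,y,z\})=\{x,z\}$ (which EI guarantees in the right form), this fixes the relative orientations of $(x,y)$, $(y,z)$ and $(x,z)$.
\item Use SI to show that the propagation is well defined on each component. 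SI says a new incomparable $w$ stays linked through $y$, so every incomparable pair touching an intermediate is forced consistently.
\item Handle mixed triplets, containing a dominance pair, via ATT and EFF.
\item Set $\succ_1 := \gg\cup\{$pairs oriented ``$1$''$\}$ and $\succ_2$ symmetrically.
\end{enumerate}
Transitivity of each $\succ_i$ must then be checked on triplets. On each triplet, the axioms should pin down $c$ on the triplet exactly as Borda ownership would, and consistency across triplets then yields transitivity. I expect this orientation-consistency step to be the main obstacle, in particular showing that no cycle of forced orientations arises. My first attempt would be an induction on $|X|$: remove an element and extend the orders, using SI and ATT to place the new element.

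\emph{Sufficiency: matching $c$ on all menus.} With the orders in hand, we prove $c(A)=\bigcup_{\succ_i\in\omega^B(A)}\max(A,\succ_i)$ by induction on $|A|$, with base cases pairs and triplets as above. For larger $A$, EFF confines $c(A)$ to Pareto-undominated items, and NBI rules out sandwiched ones. The plan is to relate $A$ to a smaller menu $A'$ containing both tops. We obtain $A$ from $A'$ by adding either dominated alternatives (governed by ATT and EXP(i)) or intermediates (governed by EXP(ii)), and track how $b_2(x_1^*,\cdot)$ and $b_1(x_2^*,\cdot)$ change. ATT handles additions that break a tie in favour of one top. EXP handles additions that preserve ties. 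A careful ordering of the added elements then lets each step be covered by one of these axioms.
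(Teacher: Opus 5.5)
Your necessity checks are fine in substance, with one slip. In EXP(ii), the added intermediate $y$ lies above $x$ in the order that $x$ does not top, and above $z$ in the order that $z$ does not top. So both non-owner counts rise by zero, not by one; equality is preserved all the same.

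The genuine gap is in sufficiency, at the step you flag yourself. You never show that the incomparable pairs can be oriented so that $\gg$ plus one orientation, and $\gg$ plus the reverse orientation, are both transitive and reproduce the revealed intermediate of every all-ties triplet. Propagating from a reference pair is a Gallai-type forcing scheme. Mixed triplets force orientations through transitivity: $a \gg c$, $a\|b$, $b\|c$ put $a$ and $c$ on the same side of $b$ in each order. All-ties triplets force them through the revealed median. The scheme becomes a proof only once you show that no forcing chain returns to a pair with the opposite orientation. That is where the axioms must do real work, and ``SI says a new incomparable $w$ stays linked through $y$'' does not do it. Even inside a single all-ties menu, consistency is not a triplet-by-triplet matter. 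That the intermediates map is the betweenness of one linear order is what the paper proves via the four-element Lemma~\ref{lem:quadruples} (using SI) and the gluing induction of Lemma~\ref{lem:gluing}. You must also treat incomparable pairs that no chain reaches from your single reference pair (e.g.\ $a\|b$ and $c\|d$ with $a,b \gg c,d$). Orienting each component separately is harmless, but only because every potential transitivity violation lives in a triplet whose incomparable pairs are all in one component; this needs to be argued.

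The paper closes the gap with the induction you only gesture at, plus two ingredients your plan lacks. First, a specific insertion order: the undominated layers $U^k$, and within each layer the successive extreme sublayers $E^{k,i}$. This guarantees that the entering $w$ never dominates an earlier alternative, so the earlier alternatives split into $B$ (dominating $w$) and $C$ (incomparable to $w$). Second, a placement rule for $w$ relative to $C$. When $w$ is an intermediate inside $C$, it is placed via revealed intermediates (well posed by Lemma~\ref{lemma3-4}). Otherwise it goes on a common side of all of $C$, dictated by how $C$ compares with $B$ (well posed by Lemma~\ref{lemma3-5}, from PC, EI and SI). Transitivity is then checked case by case in Lemmas~\ref{lemma3-6}--\ref{lemma3-7}.

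Your matching step is closer to the paper's Step~3, but it must be made precise. NBI excludes an intermediate only once both extremes are known to be chosen, and EXP is unusable once the choice is a singleton, so the order of additions matters. An order that works, starting from $\{x_1^*,x_2^*\}$:
\begin{enumerate}
\item add the alternatives dominated by neither top (these are revealed intermediates between the tops; use EXP(ii) and NBI);
\item then add those dominated by both tops (ATT);
\item then add pairs dominated by exactly one top each (EXP(i));
\item finally add the surplus on one side (ATT).
\end{enumerate}
This is essentially what the paper does with $T_1,\dots,T_4$ and the bijection $\sigma$. Every one of these steps presupposes that your orders reproduce $c$ on pairs and triplets, so it stands or falls with the missing construction.
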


\section{\sc \textbf{Conclusions}}

The notion of an ownership rule models how the structure of a choice problem affects which binary relation is used to decide over a menu. We have operationalized specific ownership rules in three nested choice environments, providing revealed preference characterizations of choice behavior for each of them. 

A few research directions may be pursued. First, ownership rules may be studied in structured settings such as consumer choice or choice under uncertainty. In structurally richer environments (e.g., \cite{cheung2025frame} or \cite{apesteguia2017single}), ownership may depend on other contextual cues. Second, one may develop notions of random ownership and extend the analysis to stochastic choice, aiming to provide novel explanations for boundedly rational choice behavior. Third, choice via ownership rules can be applied to more than two binary relations. This makes general characterization considerably less tractable, but we believe that plausible restrictions on the ownership rules along with more structure in the choice environments can be a promising avenue to explore.

\appendix
\section{\sc \textbf{Proofs}}

For brevity, when a quantifier involves two elements $x,y$, they are understood to be distinct. In many cases, allowing for $x=y$ is inconsequential.

\begin{reptheorem}{thm1}
A choice function respects monotone ownership in a minimal environment if and only if it satisfies weak-$\alpha$.
\end{reptheorem}
\begin{proof} \textit{(Necessity)} Assume that the choice function $c$ respects a (single) monotone ownership rule $\omega$. Let $c(A) \ne c(A^\prime) \in A \subset A^\prime$ and, without loss of generality, let $\omega (A^\prime) = \succ_1$. Because $c (\cdot)$ is single-valued, we have $ c(A^\prime) = \max(A^\prime,\succ_1)$. But $c(A) \ne \max \left( A^\prime,\succ_1 \right)$ for $A \subset A^\prime$ implies $\omega(A)= \succ_2$. Consider any $B \subset B^\prime \subseteq A$. Then $\omega (B) = \omega(B^\prime )= \succ_2$ by monotonicity of $\omega$, and thus if $c(B^\prime )=\max(B^\prime, \succ_2) \in B$, then $c(B^\prime ) = \max (B,\succ_2) = c(B)$.   
	
\medskip\noindent
\textit{(Sufficiency)} Consider the collection of sets for which removing one suitable element violates condition $\alpha$: let $\mathcal{C}^* = \{A \in \mathcal{P}(X) \ | \text{ for some } x \notin A, \ c(A) \neq c(A\cup \{x\}) \text{ but } c(A \cup \{x\}) \in A \}$. We distinguish two cases: first, $\mathcal{C}^*=\emptyset$; second, $\mathcal{C}^* \neq \emptyset$. 
	
\smallskip\noindent
\textit{Case 1:}  $\mathcal{C}^*=\emptyset$. We first claim that $c$ satisfies condition $\alpha$ if and only if $\mathcal{C}^*=\emptyset$. 

Necessity is immediate: if $c$ satisfies condition $\alpha$, then $c(A\cup \{x\})\in A$ implies $c(A)=c(A \cup \{x\})$ and thus $\mathcal{C}^*=\emptyset$. For sufficiency, let $\mathcal{C}^* = \emptyset$ and assume by contradiction that $c$ violates condition $\alpha$. Then there exist two menus $A \subset A^\prime$ such that $c(A^\prime ) \in A$ and $c(A) \neq c(A^\prime )$. Let $c(A)=x$ and $c(A^\prime )=y$. Consider any sequence of sets $A \coloneq A_0 \subset A_1 \subset \ldots \subset A_k=A^\prime$ such that $x,y \in A_i$ for all $i \in \{0,\ldots,k\}$ and $|A_{i+1}| = |A_i| + 1$ for all $i \in \{0,\ldots, k-1\}$. Because $c(A_0)=x$ and $c(A_k)=y$ with $y\in A_i$ for all $i$, we must have some $j \in \{0,\ldots ,k-1\}$ such that $c(A_{j+1}) \in A_{j}$ and $c(A_j)\neq c(A_{j+1})$. Then $A_j \in \mathcal{C}^*$, contradicting $\mathcal{C}^*\neq \emptyset$. This establishes the first claim.
	
This claim implies that, if $\mathcal{C}^*= \emptyset$, condition $\alpha$ holds. Hence, $c$ is a (standard)  rational choice function with full domain: we set $\mathcal{C}_1 = \mathcal{P}(X)$, $\mathcal{C}_2 = \emptyset$ and $\omega(A)= \succ_1$ for any menu $A$, where $\succ_1$ is the rational choice ordering derived from $c$.

\smallskip\noindent
\textit{Case 2:}  $\mathcal{C}^*\neq \emptyset$. Partition the collection of menus into the two sets $\mathcal{C}_1 = \{A \in \mathcal{P}(X) \ | \ A \subseteq A^\prime \text{ for some } A^\prime \in \mathcal{C}^* \}$ and $\mathcal{C}_2 = \mathcal{P}(X)\setminus\mathcal{C}_1$. Note that $\mathcal{C}_2$ is closed under unions and that $X \in \mathcal{C}_2$. 

We now claim that the restriction of $c$ to $\mathcal{C}_2$ satisfies condition $\alpha$. Suppose not. Then there exist $A \subset A^\prime \in \mathcal{C}_2$ with $c(A^\prime) \in A$ and $c(A)\neq c(A^\prime )$. However, observe that since $A \in \mathcal{C}_2$, we must have $A^{\prime\prime} \in \mathcal{C}_2$ for any $A^{\prime\prime} \supset A$. Because $\mathcal{C}_2$ is closed under unions, similarly to Case~1, we can find an increasing sequence $A \coloneq A_0 \subset A_1 \subset \ldots \subset A_k=A^\prime$ in $\mathcal{C}_2$ with $c(A_{i+1}) \in A_i$ and $c(A_i) \neq c(A_{i+1})$ for some $i$. This would imply $A_i \in \mathcal{C}^* \supset \mathcal{C}_1$, which is a contradiction. This establishes the second claim.

Let $x \succ_2 y$ hold if there exists $A \in \mathcal{C}_2$ with $x = c(A)$ and $y \in A$. Then condition $\alpha$ ensures that $\succ_2$ is an acyclic preference that rationalizes $c$ over $\mathcal{C}_2$; see \citet[Chapter~3]{rubinstein2012lecture}. For any $A \in \mathcal{C}_2$, we define $\omega(A) \coloneq \succ_2$ and assign ownership to Agent~2.
	
We now consider $\mathcal{C}_1$, that is closed under intersections.\footnote{Let $\partial\mathcal{C}^* = \{A \in \mathcal{C}^* \ | \ A \subset A^\prime \text{ implies } A^\prime \notin \mathcal{C}^* \}$ be the upper contour of  $\mathcal{C}_1$. Then $\mathcal{C}_1$ is the collection of all subsets of $\partial\mathcal{C}^*$.} Let $x \succ_1 y$ hold if there exists $A \in \mathcal{C}_1$ with $x =c(A)$ and $y \in A$. We show that the binary relation $\succ_1$ is asymmetric, and hence it is a preference. Assume by contradiction that there exist elements $x,y$ and menus $A,A^\prime \in \mathcal{C}_1$ such that (i) $x = c(A)$ and $y \in A$; and (ii) $c(A^\prime )=y$ and $x \in A^\prime $. Because $\mathcal{C}_1$ is closed under intersections, $A \cap A^\prime  \in \mathcal{C}_1$. Then weak-$\alpha$ implies both $c(A\cap A^\prime )=x$ and $c(A \cap A^\prime )=y$, which is a contradiction. The preference $\succ_1$ rationalizes all choices from menus in $\mathcal{C}_1$. For any $A \in \mathcal{C}_1$, we define $\omega(A) \coloneq \succ_1$ and assign ownership to Agent~1. 
\end{proof}

\begin{reptheorem}{thm2}
A choice function respects simple ownership in a standard environment if and only if it satisfies PC and AP.
\end{reptheorem}

\begin{proof}
	\emph{(Necessity)} Assume that the choice function $c$ respects simple ownership in a standard environment and, without loss of generality, let $\succ_1$ denote the order used over binary menus. PC holds because, if $c(\{x,y\})=x$ and $c(\{y,z\})=y$, then transitivity of $\succ_1$ implies $c(\{x,z\})=x$. Concerning AP, suppose by contradiction that $x \vartriangleright y$ and $y \vartriangleright  x$. When prevalence holds, there exist sequences of alternatives and menus supporting it. In particular, because Condition~2 holds for $\succ_1$ over some menu $A_i$, it has to be $\omega(A_i)=\succ_2$. Hence, the chain supporting $x \vartriangleright y$ implies $x\succ_2 y$ and, similarly, the chain supporting $y \vartriangleright x$ implies $y\succ_2 x$, contradicting the asymmetry of $\succ_2$. 
	
\medskip\noindent
\emph{(Sufficiency)} Assume PC and AP. We construct $\succ_1,\succ_2$ and a simple ownership rule $\omega$ that are consistent with $c$. We break the argument into three steps: 1) define $\succ_1$ and $\succ_2$; 2) define $\omega$; 3) show that they rationalize $c$ via simple ownership.

\smallskip\noindent
\emph{Step 1}: Define a binary relation $\succ_1$ on $X$ by $x \succ_1 y$ if and only if $c(\{x,y\})=x$. Then $\succ_1$ is asymmetric and complete by construction. Its transitivity follows directly from PC: if $x\succ_1 y$ and $y\succ_1 z$, then $c(\{x,y\})=x$ and $c(\{y,z\})=y$; hence, PC implies $c(\{x,z\})=x$, i.e. $x\succ_1 z$. We conclude that $\succ_1$ is an order. 

Define another binary relation $\succ_2^\ast$ by $x \succ_2^\ast y$ if and only if $x,y$ are in a menu $A$ with $c(A)=x$ and there exists $z\in A$ such that $c(\{x,z\})=z$. Intuitively, $x\succ_2^\ast y$ holds when $x$ is chosen from some menu $A$ containing $y$, but it loses a pairwise comparison within the same menu. We claim that $\succ_2^\ast$ is acyclic. By contradiction, suppose that there exists a directed cycle $y_1\succ_2^\ast y_2\succ_2^\ast\cdots\succ_2^\ast y_k\succ_2^\ast y_1$. By the definition of $\succ_2^\ast$, there exists a sequence of menus $A_1,\dots,A_k$ such that $c(A_i)=y_i$ and $y_{i+1} \in A_i$ for $i \in \left\{1,\dots,k\right\}$, with some $z_i\in A_i$ such that $c(\{y_i,z_i\})=z_i$. It follows that $y_1 \vartriangleright y_k$ and $y_k \vartriangleright y_1$, contradicting AP. Hence, $\succ_2^\ast$ contains no cycles. 

The transitive closure $\succ_2^{\ast,\mathrm{tc}}$ of $\succ_2^\ast$ is acyclic and transitive. By Szpilrajn's Extension Theorem (\cite{szpilrajn1930extension}), there exists an order $\succ_2$ that extends $\succ_2^{\ast,\mathrm{tc}}$.

\smallskip\noindent
\emph{Step 2.} Define the ownership rule by
\[
\omega(A)=
\begin{cases}
	\succ_1, & \text{if } c(A)=\max(A, \succ_1),\\[4pt]
	\succ_2, & \text{otherwise.}
\end{cases}
\]
Note that for any two-element menu $A=\{x,y\}$ we have $c(A)=\max(A, \succ_1)$ by the definition of $\succ_1$; thus, $\omega$ is indeed a simple ownership rule.
	
\smallskip\noindent
\emph{Step 3.} Finally, we show that for any menu $A$ we have $c(A)=\max(A, \omega(A))$. (If $A$ is a singleton, this is trivial; suppose $|A| \ge 2$.) If $c(A)=\max(A, \succ_1)$ then this holds by definition. Otherwise, let $c(A) = x_j \neq  x_i=\max(A, \succ_1)$. We claim that $x_j=\max(A, \succ_2)$. Pick any $y\in A\setminus\{x_j\}$. Because $x_i=\max(A, \succ_1)$, we have $c(\{x_j,x_i\})=x_i$. By the definition of $\succ_2^\ast$ (replace $z$ with $x_i$), it follows from $c(A)=x_j$ and $x_i \in A$ that $x_j\succ_2^\ast y$. Because the transitive closure $\succ_2^{\ast,\mathrm{tc}}\subseteq\succ_2$, we have $x_j\succ_2 y$. Since this holds for any $y\in A\setminus\{x_j\}$, $x_j$ is the $\succ_2$-maximal element in $A$ and thus $c(A)=\max(A, \succ_2)$, establishing that $c(A)=\max(A, \omega(A))$. 
\end{proof}

\begin{reptheorem}{thm3}
A choice correspondence respects Borda ownership in a rich environment if and only if it satisfies PC, ATT, EFF, EI, NBI, SI and EXP.
\end{reptheorem}

\begin{proof}
	
\emph{(Necessity)} We show the necessity of each axiom, one by one. Under Borda ownership, $x = c(A)$ implies $x = \max(A, \succ_i)$ for some $i \in \{1,2\}$; without loss of generality, we assume $i=1$ and use $x = \max(A, \succ_1)$.
	
\smallskip\noindent	
\emph{(PC)} If $x \gg y$ and $y \gg z$, then $x \succ_i y \succ_i z$ for $i=1,2$. Because $\succ_i$ is an order for $i=1,2$, we have $x \succ_i z$ for $i=1,2$, and hence $x \gg z$.

\smallskip\noindent	
\emph{(ATT)} Let $x \in c(A)$ and $y \notin A$ with $x \gg y$. Then $x = \max(A \cup \{ y \}, \succ_1)$ and, moreover, $b_1(x, A  \cup \{ y \}) = b_1 (x,A) +1$. If there is a distinct $z$ in $c(A \cup \{ y \})$, then $z$ is a top choice for $\succ_2$ in $A$. It has to be $z \gg y$; otherwise, the $\succ_1$-rank of $z$ in $A \cup \{y\}$ would be one lower than the rank of $x$, while the $\succ_2$-ranks of $x$ and $z$ in $A \cup \{y\}$ are unchanged, violating Borda ownership.
	
\smallskip\noindent		
\emph{(EFF)} If $x \in c(A)$, then $x = \max(A, \succ_1)$. If there exists $y \in A$ with $y \gg x$, then in particular $y \succ_1 x$, violating the $\succ_1$- maximality of $x$. 

\smallskip\noindent	
\emph{(EI)} Suppose $c(\{a,b\}) = \{a,b\}$ for all $a,b \in \{x,y,z\}$. Assume without loss of generality $x \succ_1 y \succ_1 z$. Then $c(xy) = \{x, y \}$ implies $y \succ_2 x$; similarly, we obtain $z \succ_2 y \succ_2 x$. Because the chains under the two orders are opposite, exactly one of the three alternatives is an intermediate and $c(\{x,y,z\})$ is a doubleton.

\smallskip\noindent	
\emph{(NBI)} Suppose $x\langle y \rangle z$. Then $x \succ_{i} y \succ_{i} z$ and $z \succ_{3-i} y \succ_{3-i} x$ for some $i \in \{1,2\}$. If a menu $A$ contains $x,z$, then $y \neq \max(A, \succ_i)$ for $i=1,2$; that is, $y$ is never a top choice. Hence, $y \notin c(A)$.

\smallskip\noindent	
\emph{(SI)} Suppose $x\langle y \rangle z$. Then $x \succ_{i} y \succ_{i} z$ and $z \succ_{3-i} y \succ_{3-i} x$ for some $i \in \{1,2\}$. If $c(wy)=\{w,y\}$ for some $w \notin \{x,y,z\}$, then $y \succ_j w$ and $w \succ_{3-j} y$ for some $j \in \{1,2\}$. Suppose $j =i$. (We omit the similar argument for $j \neq i$.) By transitivity of $\succ_i$, we have $x \succ_{i} y \succ_{i} w$ and $w \succ_{3-i} y \succ_{3-i} x$; but this implies $c(xw)= \{x,w\}$, $c(wy)= \{w,y\}$ and $c(xwy)= \{x,w\}$ so $x\langle y \rangle w$ holds. 
	
\smallskip\noindent		
\emph{(EXP)} Wlog, assume $c(A) = \{x,y\}$ with $x = \max(A, \succ_1)$ and $y = \max(A, \succ_2)$.

\smallskip	\noindent
\emph{(i)} Because $c(A) = \{x,y\}$, the non-owner's ranks for $x$ and $y$ are the same.  Add two alternatives $z,w \notin A$ with $x \gg z$, $y \gg w$, $x ||w$ and $y ||z$ to a menu $A$. This increases the non-owner's ranks for $x$ and $y$ exactly by one. Hence the non-owner's ranks for $x$ and $y$ remain equal in $A \cup \{z,w\})$ and $c(A \cup \{z,w\}) = \{x,y\}$.

\smallskip	\noindent
\emph{(ii)} If $z$ is a revealed compromise between $x$ and $y$, then $x \succ_1 z \succ_1 y$ and $y \succ_2 z \succ_2 x$. Then neither $x$ nor $y$ gains in the non-owner's rank when $z$ is added, and so $c(A \cup \{z\}) = \{x,y\}$.
	
\medskip\noindent
\textit{(Sufficiency)} This is a long argument, and it is convenient to break it up into lemmata and propositions. For reference, these will be consecutively numbered \ref{thm3}.$x$ with $x=1,2, \ldots$. 

The following terminology is useful. We say that a pair $\{x,y\}$ is a \emph{tie} if $c(\{x,y\})=\{x,y\}$, and a menu $A$ is \emph{all ties} if any pair in $A$ is a tie. By definition, $x \langle y \rangle z$ if $\{x,y,z\}$ is all ties and $\{x,z\}$ is a tie. Moreover, an intermediate exists only if $\{x,y,z\}$ is all ties. 

The first proposition states that the combination of four axioms (EI, NBI, SI, EXP) implies how non-resolute choices over pairs may propagate to a menu. Specifically, if any choice over pairs from a menu $A$ is a doubleton (i.e., it is non-resolute), then $c(A)$ must also be a doubleton. We call this property \emph{Non-resolute propagation} (NRP). 

\medskip
\noindent
\emph{\textbf{(NRP)}:} If the menu $A$ is all ties, then $c(A)$ is a doubleton.
\medskip

We prove the following proposition.
	
\begin{proposition}\label{prop31}
EI, NBI, SI and  EXP imply NRP.
\end{proposition}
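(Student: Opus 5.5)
The plan is to reduce NRP to a purely combinatorial statement about intermediates, and then obtain the choice by a build-up argument using EXP(ii) and NBI. Call a pair $\{x,z\}\subseteq A$ \emph{extreme} in $A$ if $x\langle w\rangle z$ for every $w\in A\setminus\{x,z\}$. I will prove NRP in two steps, for every menu $A$ that is all ties:
\begin{enumerate}[itemsep=0pt, topsep=2pt]
\item[(a)] $A$ has an extreme pair;
\item[(b)] if $\{x,z\}$ is extreme in $A$, then $c(A)=\{x,z\}$.
\end{enumerate}
The cases $|A|=2$ are trivial, since $c(A)=A$ because $A$ is a tie. For $|A|=3$, EI gives a doubleton directly; the doubleton chosen is $\{x,z\}$ with $x\langle y\rangle z$, which is exactly an extreme pair. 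So everything is about $|A|\ge 4$.

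\emph{Step (b).} Enumerate $A\setminus\{x,z\}=\{w_1,\dots,w_k\}$ and let $A_j=\{x,z,w_1,\dots,w_j\}$. I will show $c(A_j)=\{x,z\}$ by induction on $j$, starting from $c(A_0)=c(\{x,z\})=\{x,z\}$. For the inductive step, $x,z\in c(A_{j-1})$, $w_j\notin c(A_{j-1})$ (as $w_j\notin A_{j-1}$) and $x\langle w_j\rangle z$, so EXP(ii) gives $x,z\in c(A_j)$. NBI then excludes every $w_i$ with $i\le j$ from $c(A_j)$, since each is intermediate between $x$ and $z$. Hence $c(A_j)=\{x,z\}$, and $j=k$ gives the claim.

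\emph{Step (a)} is where the real work lies, and I will do it by induction on $|A|$. Let $A$ be all ties with $|A|\ge 4$, pick $y\in A$, and by the inductive hypothesis let $\{x,z\}$ be extreme in $A'=A\setminus\{y\}$. Since $\{x,y,z\}$ is all ties, EI together with the triplet case tells us exactly one of $x,y,z$ is intermediate between the other two.
\begin{enumerate}[itemsep=0pt, topsep=2pt]
\item[(1)] If $x\langle y\rangle z$, then $\{x,z\}$ is already extreme in $A$.
\item[(2)] If, say, $y\langle x\rangle z$, I aim to show that $\{y,z\}$ is extreme in $A$. I need $y\langle w\rangle z$ for each $w\in A'\setminus\{x,z\}$. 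Applying SI to $x\langle w\rangle z$ with the fourth element $y$ (a tie with $w$) yields $x\langle w\rangle y$ or $z\langle w\rangle y$, and the second is what we want.
\end{enumerate}

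The main obstacle is ruling out the wrong branch $x\langle w\rangle y$ in case (2), given $y\langle x\rangle z$ and $x\langle w\rangle z$. Intuitively these three facts cannot coexist: in the two reversed orders, $x$ strictly between $y$ and $z$ and $w$ strictly between $x$ and $z$ forces $w$ between $y$ and $z$ on the far side of $x$. Axiomatically, I will try to derive a contradiction by applying SI once more to the configuration $y\langle x\rangle z$ with fourth element $w$, which gives $y\langle x\rangle w$ or $z\langle x\rangle w$. Each outcome should clash, via EI's uniqueness of the intermediate in a triplet, with $x\langle w\rangle y$ or with $x\langle w\rangle z$, because each triplet has exactly one intermediate. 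If this bookkeeping fails, the fallback is to exploit the freedom in choosing $y$: remove an element $y$ for which case (1) holds, for example one that is intermediate in some triplet of $A$, whose existence follows from EI once $|A|\ge 3$.
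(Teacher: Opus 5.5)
Your proof is correct, and it takes a different route from the paper's. The step you hedge on closes exactly as you outline. SI applied to $y\langle x\rangle z$ with fourth element $w$ yields $y\langle x\rangle w$ or $z\langle x\rangle w$. The second would give $c(\{x,z,w\})=\{z,w\}$, contradicting $x\langle w\rangle z$. Hence $c(\{x,y,w\})=\{y,w\}$, which is incompatible with $x\langle w\rangle y$. So your first application of SI must return $z\langle w\rangle y$. You should drop the fallback: it is unnecessary and, as stated, would itself need justification. Note also that these clashes use only that $c$ assigns a single set to each triplet, not EI.

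The paper proves more than NRP requires. A four-element lemma (SI makes every intermediate an intermediate of at least two of the four triplets, and a count fixes an order) is combined with a separate gluing induction. Together they show that the intermediates map on an all-ties menu is the betweenness relation of a linear order $u_1\gtrdot\cdots\gtrdot u_n$. The paper then inducts on $|A|$: it deletes $u_2$, reinserts it with EXP(ii), and excludes the interior elements with NBI.

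You extract only the part of that structure NRP uses, namely its two endpoints (your extreme pair), through a one-element-at-a-time induction with two uses of SI. Your step (b) is the same EXP(ii)/NBI mechanism, phrased as a build-up from $\{x,z\}$. Your route is shorter and avoids the case analysis of the gluing lemma. The paper's route yields the full linear description of intermediates, which is the revealed counterpart of the fact that under Borda ownership the two orders are mutual reverses on an all-ties menu. As a by-product, your step (b) also shows that the extreme pair is unique, since it must equal $c(A)$.
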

\begin{proof}
Consider a choice correspondence $c$ and an all ties menu $A$. Then EI implies that for any triplet $T \subseteq A$ we have $|c(T)|=2$; the unique element of $T\setminus c(T)$ is the intermediate for $T$, denoted $\iota (T)$. We say that an order $\gtrdot$ on a menu $A$ \emph{describes} the \emph{intermediates map} $\iota ( \cdot )$ if $\iota (T)$ is the $\gtrdot$-median of $T$ for any triplet $T \subseteq A$.

Before completing the proof, we establish two lemmata. The first one claims that the intermediates in a four-element set can always be described by an order.
		
\begin{lemma}\label{lem:quadruples}
For every four-element menu $A$ that is all ties, there is an order $\gtrdot$ that describes the intermediates map on the triplets of $A$.
\end{lemma}		
\begin{proof}
First, we claim that any intermediate of a triplet in $A$ is also an intermediate for at least another (distinct) triplet. Suppose $A = \{ t, u, v, w \}$ with $t \langle u \rangle v$. Because $A$ is all ties, we have $c(\{u,w\})=\{u,w\}$; then SI yields either $t \langle u \rangle w$ or $v \langle u \rangle w$ and the claim holds.

Second, a four-element menu $A$ has four distinct triplets and, as argued above, each triplet $T$ has an intermediate $\iota (T)$. The claim implies that $A$ has exactly two elements (denote them $b,c$) that are twice intermediates, and two other elements (denote them $a,d$) that are never intermediates. 

Each triplet in $A$ omits exactly one element. The triplet omitting $b$ contains $c$, so $a \langle c \rangle d$; symmetrically, $a \langle b \rangle d$. The other two triplets $\{a,b,c\}$ and $\{b,c,d\}$ have distinct intermediates in $\{b,c\}$. If $a \langle b \rangle c$ and $a \langle c \rangle d$, then $\iota (\cdot)$ is described by	$a \gtrdot b \gtrdot c \gtrdot d$; and if $a \langle c \rangle b$ and $a \langle b \rangle d$, then $\iota (\cdot)$ is described by $a \gtrdot c \gtrdot b \gtrdot d$.
\end{proof}
		
The second lemma proves that the orders that locally describe the intermediates map for the triplets of an all ties menu $A$ can all be globally glued into one single order for $A$ itself.
		
\begin{lemma}[Gluing]\label{lem:gluing}
Let $\iota$ be the intermediate map for the triplets of an all ties menu $A$. Suppose that for every four-element subset $B$ of $A$ the restriction of $\iota$ to the triplets in $B$ is described by a linear order on $B$. Then $\iota$ is described by a linear order on $A$.
\end{lemma}
\begin{proof}
The proof proceeds by induction on the size of $A$. The base case $|A|=4$ is proved in Lemma~\ref{lem:quadruples} above. Let $|A|=n \geq 5$ and suppose that the claim is true for all menu of size at most $n-1$. Pick an arbitrary $a \in A$, and let $x_1 \gtrdot \cdots \gtrdot x_{n-1}$ describe $\iota$ on $A\setminus\{a\}$ by our induction hypothesis. For each $i \in \{2,\dots,n-1\}$, exactly one of the following holds: (I)~$\iota (\{x_1,a,x_i\})=x_1$; or (J)~$\iota (\{x_1,a,x_i\})=a$; or (K)~$\iota (\{x_1,a,x_i\})=x_i$.
			
Suppose (I) holds for some $i$, and let $j\in\{2,\dots,n-1\}\setminus\{i\}$. For $B=\{x_1,x_i,x_j,a\}$, the order $\gtrdot$ has $x_1 \gtrdot x_i$ (and $x_1 \gtrdot x_j$). Then $\iota (\{x_1,a,x_i\})=x_1$ forces $a \gtrdot x_1$. Moreover, because $\iota (\{x_1,x_i,x_j\}) \neq x_1$, we have $\iota (\{x_1,a,x_j\})=x_1$ because by Lemma~\ref{lem:quadruples} each alternative must be an intermediate for either zero or two triplets. Hence, (I) holds for all indices and inserting $a \gtrdot x_1$ describes $\iota$ on $A$.
			
Suppose now that (I) holds for no index. Let $J$ and $K$ respectively denote the sets of indices for which (J) and (K) occur, so that $J\cup K=\{2,\dots,n-1\}$. If $i \in J$, then $x_1\gtrdot a \gtrdot x_i$; when $i \gtrdot j$, we have $x_1\gtrdot a \gtrdot x_i \gtrdot x_j$, so $\iota (\{x_1,a,x_j\})=a$ and $j\in J$. Hence, $J$ is an upper interval of indices and then $K = \{2,\dots,p\}$ is the complementary initial segment. ($K$ can be empty, for $p=1$; or $K$ can include all indices, for $p=n-1$.) Insert $a$ between $x_p$ and $x_{p+1}$ (or after $x_{n-1}$ if $p=n-1$). 

It remains to check that the order $x_1 \gtrdot \cdots \gtrdot x_p \gtrdot a \gtrdot x_{p+1} \gtrdot \cdots \gtrdot x_{n-1}$ describes $\iota$ on the triplets $\{a,x_i,x_j\}$ with $1<i<j$. Given a 4-element menu $\{a, x_1, x_i, x_j\}$, there are three possibile cases: (i) if $i,j \in J$, then $\iota (\{x_1,a,x_i\})=a$. Since $\iota (\{x_1,x_i,x_j\})= x_i$, then $\iota (\{a,x_i,x_j\})= x_i$ because each alternative is an intermediate of either zero or two triplets. Similarly, (ii) if $i,j \in K$, then by a symmetric argument $\iota(\{a,x_i,x_j\})= x_j$. Finally, (iii) if $i \in K$ and $j \in J$, then $\iota (\{x_1, x_i, x_j \}) = \iota (\{x_1,x_i, a\})= x_i$, and $ \iota(\{x_1,a,x_j\})=a$. Therefore, $\iota (\{a,x_i,x_j\})=a$ by the usual intermediates' multiplicity. This completes the proof of Lemma~\ref{lem:gluing}.
\end{proof}
		
We return to the proof of Proposition~\ref{prop31}. If $|A|=2$, then NRP holds trivially; if $|A|=3$, EI implies $|c(A)|=2$. We prove by induction on $n=|A|\geq 3$ the following claim: given the order $u_1\gtrdot \cdots \gtrdot u_n$ that describes $\iota (\cdot)$ on $A$, $c(A)=\{u_1,u_n\}$ is a doubleton. (Lemma~\ref{lem:quadruples} and~\ref{lem:gluing} prove that such order exists.)
		
For $n=3$, this follows from the the definition of an intermediate. Assume $n \geq 4$. For any $1<i<n$, the order implies $x_1 \langle x_i \rangle x_n$; because $u_1,u_n \in A$, NBI implies $x_i \notin c(A)$ and thus $c(A) \subseteq
\{x_1, x_n\}$. Next, let $B=A\setminus\{x_2\}$. Because $B$ is all ties and $\gtrdot$ describes $\iota (\cdot)$ on $B$, the induction hypothesis gives $c(B)=\{x_1,x_n\}$. Then $x_1\langle x_2 \rangle x_n$ and $x_2\notin B$, so EXP.ii implies $x_1,x_n\in c(B\cup\{x_2\})=c(A)$. This yields $c(A)=\{x_1, x_n\}$ and completes the proof of Proposition~\ref{prop31}.
\end{proof}

\noindent
	
The proof of sufficiency is by induction on the size of choice problems. It consists of three broad steps. First, we inductively construct a pair of connected preferences using the technique in~\cite{de2012reason}. Second, we prove that these two preferences are transitive, and therefore they are orders. Third, we show that these orders rationalize the choice correspondence under Borda ownership. 

For greater readability, we use the notation $x \gg y$ when $c(xy)=\{x\}$, $x || y$ when they are incomparable, and $x \langle y \rangle z$ when $\{x,y,z\}$ is all ties and $c(xyz)=\{x,z\}$. Intuitively, we prove that $x \gg y$ implies both $x \succ_1 y$ and $x \succ_2 y$; while $x \langle y \rangle z$ implies $x \succ_i y \succ_i z$ and $z \succ_{3-i} y \succ_{3-i} x$ for either $i=1$ or $i=2$, but not both.
	
\smallskip\noindent	
\textit{Step 1.} To set the background, define for each integer $k$ the subset of $X$
$$U^k = \left\{ x \in X \setminus \left( \bigcup_{j=0}^{k-1} U^j \right) \middle|\ \nexists \ y \in X \setminus  \left( \bigcup_{j=0}^{k-1} U^j \right) \text{ such that } y \gg x \right\},$$
with $U^0 = \emptyset$. The subset $U^1$ consists of the undominated alternatives in $X$. For $k>1$, the subset $U^k$ consists of the alternatives in $X$ that are undominated after removing $\left( \bigcup_{i=0}^{k-1} U^i \right)$. The collection $\{U^k\}$ is a finite partition of $X$ into layers of alternatives that are undominated at different levels $k$. 
 
Next, we partition each undominated layer $U^k$ into sublayers. For each integer $i$ define the subset of $U^k$: 
$$E^{k,i} = \left\{ y \in U^k \setminus \left( \bigcup_{j=0}^{i-1} E^{k,j} \right) \middle|\ \nexists \ x,z \in U^k \setminus \left( \bigcup_{j=0}^{i-1} E^{k,j} \right)\text{ such that } x \langle y \rangle z \right\},$$
with $E^{k,0} = \emptyset$. The sublayer $E^{k,1}$ contains the alternatives that are never intermediates between elements of $U^k$: we interpret them as the \emph{extreme} elements for $U^k$.  Similarly, $E^{k,i}$ consists of the extreme elements in the sublayer $U^k \setminus \left( \bigcup_{j=0}^{i-1} E^{k,j} \right)$. Because $U^k$ is all ties, EI implies that, if $E^{k,i}$ contains two extremes $x,z$, any third element $y$ must be an intermediate. Consequently, $|E^{k,i}|\le 2$ with equality holding when $U^k \setminus \left( \bigcup_{j=0}^{i-1} E^{k,j} \right)$ is not empty and has at least two elements. In sum, each layer $U^k$ is partitioned in a collection of sublayers $\{ E^{k,i} \}$ that consists of one or two elements. 

We use the collections $\{U^k\}$ and $\{E^{k,i}\}$ to assign relative ranks to alternatives in the inductive construction of a pair $\succ = (\succ_1,\succ_2)$ of binary relations. Afterwards, we will prove that the choice correspondence $c_\succ ( \cdot )$ generated by the pair $\succ$ under Borda ownership satisfies $c(A) = c_{\succ}(A)$ for any menu $A$.

The construction of $\succ$ starts with a pair of alternatives in $X$ and incrementally adds one element at a time until $X$ is exhausted. By definition each alternative $x$ belongs to a unique sublayer $E^{k,i}$. Let $k(x), i(x)$ respectively denote the two integers associated to $x$. We add $x$ before $y$ if the ordered pair $\left( k(x), i(x) \right)$ is lexicographically inferior to $\left( k(y), i(x) \right)$. This convention does not uniquely enumerate all the alternatives, because  $E^{k,i}$ usually contains two elements. However, the exact order in which two elements of  $E^{k,i}$ are included is inconsequential.

The construction proceeds by induction. For the base case, let $x$ and $y$ be the first two alternatives (in the lexicographic order defined above). If $x \gg y$, set $x \succ_1 y$ and $x \succ_2 y$; if $y \gg x$, set $y \succ_1 x, y \succ_2 x$; and if $ x || y$, set $x \succ_1 y$ and $y \succ_2 x$. By construction, $c (\{x,y\}) = c_\succ (\{x,y\})$.
	
For the inductive step, suppose that $\succ_1,\succ_2$ have been defined on a subset $A \subset X$ and that $c_{\succ}(A) = c(A)$ for all $B \subseteq A$. Let $w \notin A$ be the next element to be added. We extend the pair $\succ = (\succ_1,\succ_2)$ on $A$ to the pair $\succ^\prime = (\succ_1^\prime, \succ_2^\prime)$ over $A \cup \{w\}$.
	
We partition $A$ into the two disjoint subsets
\begin{equation}\label{eq:BandC}
B = \left\{x \in A \mid x \gg w \right\} \quad\mbox{ and } \quad
C = \left\{x \in A \mid x || w \right\}
\end{equation}
Note $B \cup C = A$, because under the lexicographic order there cannot exist  $x \in A$ such that $w \gg x$.
	
For each $x \in B$, define $x \succ_1^\prime w$ and $x \succ_2^\prime w$. If $x \in C$, we need to distinguish two cases. In the first case, $w$ is an intermediate in $C$: that is, there exist $z,z^\prime \in C$ such that $z \langle w \rangle z^\prime$. If there exists $y \in C$ with $x \succ_1 y$ and $x \langle w \rangle y$, we set $x \succ_1^\prime w$ and $w \succ_2^\prime x$; otherwise, there exists $y \in C$ with $x \succ_2 y$ and $x \langle w \rangle y$ and we set $x \succ_2^\prime w$ and $w \succ_1^\prime x$ . The following lemma confirms that this definition is well posed.

\begin{lemma}\label{lemma3-4}
Let $x \in C$. (i) If $z \langle w \rangle z^\prime$ for some $z,z' \in C$, then there exists $y \in C$ with $x \langle w \rangle y$. (ii) Moreover, if there are $y,y^\prime \in C$ such that $x \langle w \rangle y$ and $x \langle w \rangle y^\prime$, then $x \succ_i y$ if and only if $x \succ_i y^\prime$ for both $i=1,2$.
\end{lemma}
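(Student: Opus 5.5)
The plan is to prove both parts from SI and Lemma~\ref{lem:quadruples}, together with the inductive hypothesis that $c_\succ(B)=c(B)$ for all $B\subseteq A$. I will also assume that $\succ_1,\succ_2$ restricted to $A$ are orders. The paper only establishes transitivity in its second step, so that step must in effect be carried along the induction; this is the point I regard as least secure. Throughout I use that $x\langle w\rangle y$ is symmetric in its endpoints, and that every element of $C$ ties with $w$ by the definition of $C$ in~\eqref{eq:BandC}.

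\emph{Part (i).} Let $z\langle w\rangle z'$ with $z,z'\in C$, and take $x\in C$. If $x=z$, take $y=z'$; if $x=z'$, take $y=z$. Otherwise $x\notin\{z,w,z'\}$. Since $x\in C$, we have $c(\{w,x\})=\{w,x\}$. SI then gives $z\langle w\rangle x$ or $z'\langle w\rangle x$, so $y=z$ or $y=z'$ works. In every case $y\in C$, so (i) is immediate.

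\emph{Part (ii).} Suppose $x\langle w\rangle y$ and $x\langle w\rangle y'$ with $y\neq y'$. By definition of an intermediate, $\{x,w,y\}$ and $\{x,w,y'\}$ are all ties, so $x\,||\,y$ and $x\,||\,y'$. All three of $x,y,y'$ lie in $A$, so their pairwise relations under $\succ_1,\succ_2$ are read off from $c$ via the inductive hypothesis. I split according to the relation between $y$ and $y'$.

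\emph{Case $y\gg y'$} (the case $y'\gg y$ is symmetric). Then $y\succ_1 y'$ and $y\succ_2 y'$. Since $x\,||\,y$, there is exactly one $i$ with $x\succ_i y$, and then $y\succ_{3-i}x$. Transitivity of $\succ_i$ on $A$ gives $x\succ_i y\succ_i y'$, hence $x\succ_i y'$. Because $x\,||\,y'$, it follows that $y'\succ_{3-i}x$. So $x$ beats $y$ and $y'$ in exactly the same order, as required.

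\emph{Case $y\,||\,y'$.} Then $\{x,y,y',w\}$ is all ties, and Lemma~\ref{lem:quadruples} provides an order $\gtrdot$ describing its intermediates. The element $w$ is the intermediate of both $\{x,w,y\}$ and $\{x,w,y'\}$. Hence $w$ is one of the two inner elements of $\gtrdot$ and $x$ is an endpoint, with $y$ and $y'$ both lying on the side of $w$ opposite to $x$. For instance $x\gtrdot w\gtrdot y\gtrdot y'$, which gives $x\langle y\rangle y'$; the other arrangement gives $x\langle y'\rangle y$.

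Since $\{x,y,y'\}\subseteq A$, the inductive hypothesis makes $c=c_\succ$ on this triplet. Under Borda ownership with orders, $x\langle y\rangle y'$ means $x\succ_i y\succ_i y'$ and $y'\succ_{3-i}y\succ_{3-i}x$ for some $i$, which is exactly the necessity argument for EI and NBI. Therefore $x\succ_i y$ and $x\succ_i y'$ hold for the same $i$; the arrangement $x\langle y'\rangle y$ gives the same conclusion.

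The main obstacle is that this argument uses transitivity of $\succ_1,\succ_2$ on $A$ inside the induction, whereas the paper proves transitivity only afterwards. If that step cannot be folded into the induction, the dominance case needs a fallback. The first thing I would try is to derive $y'\succ_{3-i}x$ directly from ATT applied to the menus $\{x,w,y\}$ and $\{x,w,y'\}$.
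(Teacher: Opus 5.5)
Your proof is correct. Part (i) is the paper's argument, and you state it more carefully than the paper does: SI needs the tie $c(\{w,x\})=\{w,x\}$, which comes from $x\in C$, and the degenerate case $x\in\{z,z'\}$ has to be set aside.

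Part (ii) takes a different route.
\begin{itemize}
\item \emph{The paper} argues by contradiction with no case split. Suppose $x\succ_1 y$ but $y'\succ_1 x$. The ties $x||y$ and $x||y'$ give $y\succ_2 x\succ_2 y'$. Transitivity on $A$ then forces $y||y'$ and $y\langle x\rangle y'$. A single application of SI with the tie $\{x,w\}$ yields $y\langle x\rangle w$ or $w\langle x\rangle y'$. Either one clashes with the hypotheses $x\langle w\rangle y$ and $x\langle w\rangle y'$, which fix $c(\{x,y,w\})=\{x,y\}$ and $c(\{x,y',w\})=\{x,y'\}$.
\item \emph{You} split on $c(\{y,y'\})$. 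The dominance case is settled by transitivity alone. In the tie case you apply Lemma~\ref{lem:quadruples} to the all-ties quadruple $\{x,y,y',w\}$, locate $x$ as an endpoint of the describing order, and read the two orders off $c=c_\succ$ on $\{x,y,y'\}$.
\end{itemize}
The paper's route is shorter and needs only SI plus the inductive hypothesis. Yours also draws on EI (through Lemma~\ref{lem:quadruples}) and costs a case split. In return it gives positive structural information: when $y||y'$, one of $x\langle y\rangle y'$ or $x\langle y'\rangle y$ holds. The paper's argument only excludes a bad configuration.

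The obstacle you flag is not a real one. The paper carries transitivity through the same induction:
\begin{itemize}
\item Step~2 is run at every stage of the construction.
\item Lemma~\ref{lemma3-6} lists transitivity among its hypotheses.
\item The proof of Lemma~\ref{lemma3-5} cites transitivity on $A$ as part of the induction hypothesis.
\item The paper's own proof of (ii) uses it when ``chaining'' the relations.
\end{itemize}
So you may assume $\succ_1,\succ_2$ are orders on $A$, and no ATT-based fallback is needed.
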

\begin{proof}
(i) Because $z || z^\prime$, SI implies either  $x \langle w \rangle z$ or $x \langle w \rangle z^\prime$. Choosing $y$ equal to either $z$ or $z^\prime$ establishes the claim.

\noindent
(ii) Assume $y,y^\prime \in C$ such that $x \langle w \rangle y$ and $x \langle w \rangle y^\prime$. Because $y,y^\prime \in C$, EFF implies $x || y$ and $x || y^\prime$. Suppose 
$x \succ_1 y$ and by contradiction $y^\prime \succ_1 x$. By the induction hypothesis, $y \succ_2 x$ and $x \succ_2 y^\prime$. Chaining the binary relations, we have $y^\prime \succ_1 x \succ_1 y$ and $y \succ_2 x \succ_2 y^\prime$ and thus $y \langle x \rangle y^\prime$. Because $x || w$, SI implies either $y \langle x \rangle w$ or $w \langle x \rangle y^\prime$. But this contradicts $x || y$ and $x || y^\prime$.
\end{proof}
	
In the second case, $w$ is not an intermediate: there do not exist $z,z^\prime \in C$ with $z\langle w \rangle z^\prime$. If there exist $\xi \in B$ and $y \in C$ such that $y \succ_1 \xi$ (here, $y$ may be $x$ itself), we set $x \succ_1^\prime w$ and $w \succ_2^\prime x$; if instead there exist $\xi \in B$ and $y \in C$ with $y \succ_2 \xi$, we set $x \succ_2^\prime w$ and $w \succ_1^\prime x$; finally, if neither holds (i.e., there is no $\xi \in B$ and no $y \in C$ with either $y \succ_1 \xi$ or $y \succ_2 \xi$), then one is free to choose one of either definition (i.e., $x \succ_1^\prime w$ and $w \succ_2^\prime x$ for all $x \in C$, or $w \succ_1^\prime x$ and $x \succ_2^\prime w$ for all $x \in C$). The next lemma confirms that this definition is well posed.

\begin{lemma}\label{lemma3-5}
If there exist no $z,z^\prime \in C$ with $z\langle w \rangle z^\prime$, then there exist no $\xi,\xi^\prime \in B$ and distinct $y,y^\prime \in C$ with $y \succ_2 \xi$ and $y^\prime \succ_1 \xi^\prime$.
\end{lemma}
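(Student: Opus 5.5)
We argue by contradiction. Suppose that $w$ is not an intermediate in $C$, but there are $\xi,\xi^\prime \in B$ and distinct $y,y^\prime \in C$ with $y \succ_2 \xi$ and $y^\prime \succ_1 \xi^\prime$. We will show that $y \langle w \rangle y^\prime$, which contradicts the hypothesis. Throughout, we only use that $\succ_1,\succ_2$ are orders on $A$ with $c_\succ = c$ on subsets of $A$ (the induction hypothesis), together with the axioms applied to menus that contain $w$.

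\emph{Step 1: reduce to ties.} If $y \gg \xi$, then $y \gg \xi \gg w$, so PC gives $y \gg w$, contradicting $y \in C$. Hence $y || \xi$. Since $c_\succ = c$ on pairs in $A$, $y \succ_2 \xi$ then forces $\xi \succ_1 y$. Symmetrically, $y^\prime || \xi^\prime$ with $\xi^\prime \succ_2 y^\prime$.

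\emph{Step 2: locate $w$ relative to $y$ and $y^\prime$.} We want the revealed counterparts of $w \succ_1 y$ and $w \succ_2 y^\prime$, that is, $w$ beats $y$ ``on the $\xi$-side'' and beats $y^\prime$ ``on the $\xi^\prime$-side''. Apply ATT to $\{y,\xi\}$, whose choice set contains $\xi$ with $\xi \gg w$. Because $y \not\gg w$, this gives $c(\{y,\xi,w\})=\{\xi\}$; similarly $c(\{y^\prime,\xi^\prime,w\})=\{\xi^\prime\}$. Next, compare these menus with $\{y,w\}$ and $\{y^\prime,w\}$, both ties, and with the menu $\{y,y^\prime,\xi,\xi^\prime\}\subseteq A$, whose choice is computed from $\succ_1,\succ_2$ by the induction hypothesis. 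Adding $w$ to that menu and using ATT (when some chosen element dominates $w$) and EXP(i) (with $\xi,\xi^\prime$ playing the roles of the two dominated additions relative to chosen elements) should pin down the relative position of $w$.

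\emph{Step 3: show that the triplet is all ties.} We must rule out $y \gg y^\prime$ and $y^\prime \gg y$. Suppose $y \gg y^\prime$. Then $y \succ_i y^\prime$ for both $i$, so $y \succ_1 y^\prime \succ_1 \xi^\prime$ and $\xi^\prime \succ_2 y^\prime$. Combining this with the information from Step 2 on menus containing $w$ (for example, $\{y,y^\prime,\xi^\prime,w\}$ handled via ATT), we expect to contradict $y^\prime || w$ or the choice from $\{y^\prime,\xi^\prime,w\}$.

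\emph{Step 4: conclude.} Once $\{y,y^\prime,w\}$ is all ties, EI makes $c(\{y,y^\prime,w\})$ a doubleton. It remains to exclude the doubletons containing $w$. Say $c(\{y,y^\prime,w\})=\{w,y^\prime\}$, so that $y$ is revealed intermediate. Using SI with the tie pairs $\{y,\xi\}$ and $\{y^\prime,\xi^\prime\}$, together with NBI on a menu that contains both $\xi$ and $w$, we expect to contradict $c(\{y,\xi,w\})=\{\xi\}$. The remaining case is symmetric. This yields $y \langle w \rangle y^\prime$ with $y,y^\prime \in C$, which is the desired contradiction.

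I expect Steps 2--3 to be the main obstacle. Orientation facts such as ``$w \succ_1 y$'' are not yet available, since $w$ lies outside the domain of the induction hypothesis. They therefore have to be transported from menus inside $A$ to menus containing $w$ purely through ATT, EXP(i) and EFF. If a direct approach fails, I would first fully tabulate $c$ on $\{y,y^\prime,\xi,\xi^\prime\}$ according to the possible $\succ$-configurations and then add $w$ case by case.
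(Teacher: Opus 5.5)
Your skeleton is the paper's. Step 1 is exactly the paper's display (1). The paper's contradiction is likewise that $\{y,y^\prime,w\}$ is all ties, EI makes its choice a doubleton, and the only admissible doubleton is $\{y,y^\prime\}$, i.e.\ $y\langle w\rangle y^\prime$. But Steps 2--4 are only announced (``should'', ``we expect''), and the idea that makes them work is missing.

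That idea is: \emph{no element of $C$ can be revealed intermediate between two elements of $B$}. If, say, $\xi\langle y\rangle\xi^\prime$, then $c(\{y,w\})=\{y,w\}$ and SI give $\xi\langle y\rangle w$ or $\xi^\prime\langle y\rangle w$. Both are impossible because $\xi,\xi^\prime\gg w$. With this, everything is settled inside $A$ by the induction hypothesis, splitting on whether $\xi\gg y^\prime$ and $\xi^\prime\gg y$.
\begin{itemize}
\item If both hold, chaining with (1) gives $\xi\succ_1 y^\prime\succ_1\xi^\prime\succ_1 y$ and $\xi^\prime\succ_2 y\succ_2\xi\succ_2 y^\prime$.
\item If, say, $\xi\not\gg y^\prime$, PC forces $\xi||y^\prime$. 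The orientation $\xi\succ_1 y^\prime$ would yield $\xi\langle y^\prime\rangle\xi^\prime$, which is excluded. Hence $y^\prime\succ_1\xi\succ_1 y$ and $y\succ_2\xi\succ_2 y^\prime$.
\end{itemize}
Either way $y||y^\prime$, which is your Step 3. The same configuration also closes the branch $y^\prime\langle y\rangle\xi$ (resp.\ $y\langle y^\prime\rangle\xi^\prime$) left open by your SI step in Step 4. There, either the relevant pair is not a tie, or the element of $B$ rather than the element of $C$ is the intermediate. NBI is not what is needed; excluding these configurations takes SI applied with $w$ as the fourth alternative, as in the key idea above.

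The substitutes you propose do not close these steps. Step 2 aims to locate $w$ relative to $y,y^\prime$. That is unavailable, since this lemma is what makes the definition of $\succ^\prime$ at $w$ well posed. It is also unnecessary: the paper uses only PC, the induction hypothesis, EI and SI. There $w$ appears only as the fourth alternative in SI and inside the triple $\{y,y^\prime,w\}$; ATT, EXP and NBI play no role.

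Nor can ATT do the work of Step 3. Take $\xi\succ_1 y\succ_1 y^\prime\succ_1\xi^\prime$ and $\xi^\prime\succ_2 y\succ_2\xi\succ_2 y^\prime$, which is compatible with (1) and has $y\gg y^\prime$. No three-element submenu of $\{y,y^\prime,\xi^\prime,w\}$ has a chosen element dominating the missing alternative. For example, $c(\{y,y^\prime,\xi^\prime\})=\{y\}$ with $y\not\gg w$, and $c(\{y^\prime,\xi^\prime,w\})=\{\xi^\prime\}$ with $\xi^\prime\not\gg y$. So ATT says nothing about the menu you suggest. This configuration is refuted only by observing $\xi\langle y\rangle\xi^\prime$ and applying SI with $c(\{y,w\})=\{y,w\}$.
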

\begin{proof}
Assume by contradiction that there exist $\xi,\xi^\prime \in B$ and $y,y^\prime \in C$ with $y \succ_2 \xi$ and $y^\prime \succ_1 \xi^\prime$. Then $c(\{\xi,y\}) \neq \{\xi\}$ because $\xi \not\gg y$. Also, $c(\{\xi,y\}) \neq \{y\}$, or else $y \gg \xi$ which (by $\xi \in B$ and PC) would entail $y \gg w$, contradicting $y \in C$. Hence $c(\{\xi,y\}) = \{\xi, y\}$ and, by the same argument, $c(\{\xi^\prime,y^\prime \}) = \{\xi^\prime,y^\prime \}$. Therefore, under our induction hypothesis, we have
\begin{equation}\label{eq:1}
\left( \xi \succ_1 y \mbox{ and } y^\prime \succ_1 \xi^\prime \right) \quad\mbox{ and }\quad \left( y \succ_2 \xi \mbox{ and } \xi^\prime \succ_2 y^\prime \right)
\end{equation}

We can distinguish two cases.
		
\smallskip\noindent	
\textit{Case 1:} $\xi \gg y^\prime$ and $\xi^\prime \gg y$. 
By dominance, $\xi \succ_1 y^\prime$ and $\xi^\prime \succ_1 y$. Combined with~(\ref{eq:1}), we have $\xi \succ_1 y^\prime \succ_1 \xi^\prime \succ_1 y$. Similarly, we get $\xi^\prime \succ_2 y \succ_2 \xi \succ_2 y^\prime$. By the induction hypothesis, $\succ_1$ and $\succ_2$ are transitive on $A$ and  $c = c_\succ$ for all menus in $A$; therefore, we can check that $c(\{y,y^\prime\}) = \{y,y^\prime\}$ and, similarly, that the three menus $\{\xi,y\}$, $\{\xi^\prime,y^\prime\}$, $\{\xi, \xi^\prime\}$ are ties.
		
Because the menu $\{y,y^\prime,w\})$ is all ties, the choice $c(\{y, y^\prime, w \})$ must be a doubleton. Suppose $c(\{y,y^\prime,w\}) = \{y,w\}$. Then $y\langle y^\prime \rangle w$ and, because $\{\xi^\prime,y^\prime\}$ is a tie, SI implies either $y \langle y^\prime \rangle \xi^\prime$ or $\xi^\prime \langle y^\prime \rangle w$. But this is impossible because $\xi^\prime \gg y$ and $\xi^\prime \gg w$. A similar argument rules out $c(\{y,y^\prime,w\}) = \{y^\prime,w\}$. The third and last possibility is $c(\{y,y^\prime,w\}) = \{y,y^\prime\}$, which contradicts the assumption of the lemma that there are no $z,z^\prime \in C$ with  $z\langle w \rangle z^\prime$. We conclude that the first case cannot occur.

\smallskip\noindent
\textit{Case 2:} either $\xi \not\gg y^\prime$ or $\xi^\prime \not\gg y$ (or both).
Without loss of generality, suppose by contradiction that $c(\{\xi,y^\prime\}) \neq \{\xi\}$. (A similar argument holds if $c(\{\xi^\prime,y\}) \neq \{\xi^\prime\}$.) Then either $c(\{\xi,y^\prime\}) = \{y^\prime\}$ or $c(\{\xi,y^\prime\}) = \{\xi,y^\prime\}$. We show that neither is possible, reaching a contradiction.

The first case $c(\{\xi,y^\prime\}) = \{y^\prime\}$ is not possible, because PC would imply $ y^\prime \gg w$, contradicting the assumption $y^\prime \in C$. Consider the other case $c(\{\xi,y^\prime\}) = \{\xi,y^\prime\}$. If $\xi \succ_1 y^\prime$, then $c(\{\xi,y^\prime\}) = \{\xi,y^\prime\}$ implies $y^\prime \succ_2 \xi$. Using (1) and transitivity, we have the two chains $\xi \succ_1 y^\prime \succ_1 \xi^\prime$ and $\xi^\prime \succ_2 y^\prime \succ_2 \xi$, which give $\xi \langle y^\prime \rangle \xi^\prime$. Because $y^\prime \in C$, we have that $c(\{y^\prime,w\}) = \{y^\prime,w\}$. Then SI implies either $\xi \langle y^\prime \rangle w$ or
$\xi^\prime \langle y^\prime \rangle w$. But neither is possible because $c(\{\xi,w\}) = \{\xi\}$ and $c(\{\xi^\prime,w\}) = \{\xi^\prime\}$. This rules out $\xi \succ_1 y^\prime$.

Under the current assumption $c(\{\xi,y^\prime\}) = \{\xi,y^\prime\}$, therefore, it must be $y^\prime \succ_1 \xi$ and $\xi \succ_2 y^\prime$. Combining with (1), we get $y^\prime \succ_1 \xi \succ_1 y$ and $y \succ_2 \xi \succ_2 y^\prime$. This implies that $y || y^\prime$; then, because $y,y^\prime \in C$, EI implies that $c(\{y,y^\prime,w\})$ must be a doubleton. This may happen in three different ways. We show that they are all impossible, reaching the final contradiction.

First, suppose $c(\{y,y^\prime,w\}) = \{y,w\}$. Then $y \langle y^\prime \rangle w$. Because $c(\{\xi,y^\prime\}) = \{\xi, y^\prime\}$, SI implies $y \langle y^\prime \rangle \xi$ or $\xi \langle y^\prime \rangle w$. But the first case is not possible because the two chains imply $c(\{y^\prime,\xi,y\}) = \{y,y^\prime\}$, and the second case is not possible because $\xi \in B$. A similar argument rules out $c(\{y,y^\prime,w\}) = \{y^\prime,w\}$. The last possibility
is $c(\{y,y^\prime,w\}) = \{y,y^\prime\}$, which contradicts he assumption of the lemma that there are no $z,z^\prime \in C$ with  $z\langle w \rangle z^\prime$.
\end{proof}

\smallskip\noindent	
\textit{Step 2.} 	
We have defined a pair of connected preferences $\succ_1^\prime, \succ_2^\prime$ over $A \cup \{w\}$ that extend the connected transitive preferences $\succ_1, \succ_2$ over $A$. The second step proves that at each step of our inductive construction the extensions $\succ_1^\prime$ and $\succ_2^\prime$ are transitive on $A \cup \{w\}$. 

Given that the statement of transitivity has two premises, we arrange the proof in two lemmata. The first lemma covers the two cases where one of the premises uses $\succ$, and hence calls directly on the inductive hypothesis. The second lemma deals with the case where neither premise uses $\succ$.

\begin{lemma}\label{lemma3-6}
Let $(\succ_1, \succ_2)$ be two connected, transitive preferences defined over a subset $A$ of $X$ such that $c = c_\succ$ for all $B \subseteq A$, under Borda ownership. For the next element $w \notin A$ to be added,  let $(\succ_1^\prime, \succ_2^\prime)$ be the extensions of $(\succ_1, \succ_2)$ as defined above. Let $x,y$ be two elements of $A$. For $i \in \{1,2\}$, the following two statements hold:\\
\hspace*{2cm} 1) If $x \succ_i y$ and $y \succ_i^\prime w$, then $x \succ_i^\prime w$.\\ 
\hspace*{2cm} 2) If $w \succ_i^\prime y$ and $y \succ_i x$, then $w \succ_i^\prime x$.\end{lemma}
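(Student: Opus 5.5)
The plan is a case analysis on where $y$ sits in the partition $A=B\cup C$ of~(\ref{eq:BandC}), and, when $y\in C$, on which branch of the construction applied to $w$: the \emph{intermediate branch}, where $z\langle w\rangle z'$ for some $z,z'\in C$, or the \emph{non-intermediate branch}. Two facts will be used repeatedly.
\begin{enumerate}[itemsep=0pt, topsep=2pt]
\item[(a)] For $x\in C$ and $y\in B$, the pair $\{x,y\}$ is a tie. Indeed, $x\gg y$ would give $x\gg w$ by PC, contradicting $x\in C$; and $y\gg x$ would give $y\succ_j x$ for both $j$. So $x\succ_i y$ forces $y\succ_{3-i}x$, by the inductive hypothesis $c=c_\succ$ on subsets of $A$.
\item[(b)] By EI, a triplet that is all ties has a unique intermediate. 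Hence $p\langle q\rangle r$ and $p\langle r\rangle q$ cannot both hold.
\end{enumerate}
Statement 2) should follow from the same argument as statement 1) after swapping the roles of "above $w$" and "below $w$". So I first prove 1) in full and then indicate the changes for 2).

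\emph{Statement 1).} Assume $x\succ_i y$ and $y\succ_i' w$.

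If $x\in B$, then $x\succ_i' w$ by definition, and we are done. So assume $x\in C$.

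\emph{Subcase $y\in B$.} In the non-intermediate branch, $x\succ_i y$ with $y\in B$ and $x\in C$ is exactly the witness condition "$y'\succ_i \xi$ for some $\xi\in B$, $y'\in C$". The construction therefore sets every element of $C$ above $w$ in $\succ_i'$, and Lemma~\ref{lemma3-5} guarantees that the opposite witness does not also occur, so the rule is unambiguous.

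In the intermediate branch, suppose by contradiction that $w\succ_i' x$. By construction (using Lemma~\ref{lemma3-4} for well-posedness) there is $z\in C$ with $x\langle w\rangle z$ and $x\succ_{3-i} z$, hence $z\succ_i x$.
\begin{itemize}[itemsep=0pt, topsep=2pt]
\item By (a), $y\succ_{3-i}x$. Combining, $z\succ_i x\succ_i y$ and $y\succ_{3-i}x\succ_{3-i}z$.
\item $\{z,y\}$ must be a tie: $z\gg y$ would give $z\gg w$, contradicting $z\in C$, while $y\gg z$ fails because $z\succ_i y$.
\item So $\{z,x,y\}$ is all ties, and the two chains give $z\langle x\rangle y$.
\item Since $x\,||\,w$, SI gives $z\langle x\rangle w$ or $y\langle x\rangle w$. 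The second is impossible because $y\gg w$. The first contradicts $x\langle w\rangle z$ by (b).
\end{itemize}

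\emph{Subcase $y\in C$.} In the non-intermediate branch, all of $C$ is placed on the same side of $w$ in $\succ_i'$, so $x\succ_i' w$ is immediate.

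The intermediate branch is the main obstacle. Here $y\succ_i' w$ means there is $u\in C$ with $y\langle w\rangle u$ and $y\succ_i u$, so $y\succ_i w\succ_i u$. By Lemma~\ref{lemma3-4}(i) there is also $v\in C$ with $x\langle w\rangle v$. I must show that $x\succ_i v$. Suppose instead $v\succ_i x$. I would first try to produce a contradiction through SI, applied to an intermediate among $\{x,y,u,v\}$ together with the tie to $w$, in the same way as the previous subcase. The key auxiliary observation is that, by Lemma~\ref{lemma3-4}(ii), a single witness determines the side of $w$. So I may replace $v$ by $u$ or $y$ whenever the relevant triplets are all ties. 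For the complementary case, where some of these pairs are dominations rather than ties, I would use PC: any domination involving these elements would be inherited by $w$, contradicting membership in $C$. The orientation then follows from transitivity of $\succ_i$ on $A$, via $x\succ_i y\succ_i u$.

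\emph{Statement 2).} Assume $w\succ_i' y$ and $y\succ_i x$. Then $y\in C$, because elements of $B$ lie above $w$ in both relations.

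The point needing care is to exclude $x\in B$.
\begin{itemize}[itemsep=0pt, topsep=2pt]
\item In the non-intermediate branch, $y\succ_i x$ with $x\in B$ would be a witness placing $C$ above $w$ in $\succ_i'$, contradicting $w\succ_i' y$ (Lemma~\ref{lemma3-5}).
\item In the intermediate branch, take the witness $u\in C$ with $y\langle w\rangle u$ and $u\succ_i w\succ_i y$. Then argue exactly as in the $y\in B$ subcase: by (a) $\{y,x\}$ is a tie, $\{u,x\}$ is a tie (since $u\gg x$ would give $u\gg w$, and $x\gg u$ fails because $u\succ_i y\succ_i x$), so $\{u,y,x\}$ is all ties with $u\langle y\rangle x$; then SI, $x\gg w$ and (b) give the contradiction.
\end{itemize}
Once $x\in C$, the argument is the mirror image of the $y\in C$ subcase of statement 1).
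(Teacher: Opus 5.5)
Your case split (the positions of $x$ and $y$ in $B\cup C$, then intermediate versus non-intermediate branch) is the paper's. The cases $x\in C$, $y\in B$ of statement 1 and $x\in B$ of statement 2 are correct and follow essentially the paper's argument. The construction supplies a witness in $C$, the two chains give an intermediate ($z\langle x\rangle y$, resp.\ $u\langle y\rangle x$), and SI with the tie to $w$ finishes. Fact (a) is overstated as written, since $y\gg x$ is possible for $x\in C$, $y\in B$; you only use the conditional version under $x\succ_i y$, which your argument does prove.

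The genuine gap is the case $x,y\in C$ with $w$ intermediate in $C$. You flag it as the main obstacle, and it is where the paper spends most of its effort. What you write there is a plan, not an argument. The one concrete mechanism you offer for the ``complementary case'' is also false. PC passes a domination on to $w$ only through an element of $B$, while $x,y,u,v$ all lie in $C$. For example, $x\succ_1 y\succ_1 w$ and $w\succ_2 x\succ_2 y$ give $x\gg y$ with $x,y\in C$, and nothing follows about $w$. So searching for an intermediate among $\{x,y,u,v\}$ leaves such configurations uncovered. Statement 2 for $x\in C$, which you defer to the ``mirror image'', inherits the same gap.

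The missing step is to apply SI with $w$ itself as the intermediate. You have $y\langle w\rangle u$ with $y\succ_i u$, and $\{x,w\}$ is a tie because $x\in C$. Also $x\notin\{y,u\}$ since $x\succ_i y\succ_i u$. SI then gives $x\langle w\rangle y$ or $x\langle w\rangle u$.
\begin{itemize}
\item In the first case, $y$ is a witness for $x$, and $x\succ_i y$ gives $x\succ_i' w$ by construction.
\item In the second case, $u$ is a witness, and transitivity on $A$ gives $x\succ_i u$, hence again $x\succ_i' w$. Lemma~\ref{lemma3-4}(ii) ensures that any witness decides the side.
\end{itemize}
This is exactly the endgame you anticipate (``replace $v$ by $u$ or $y$'', ``via $x\succ_i y\succ_i u$''). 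The point is that SI guarantees one of the two replacements is available. You need no contradiction hypothesis, no auxiliary $v$, and no split into ties versus dominations.

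Reversing the inequalities settles the $x,y\in C$ case of statement 2 with the same lines. There $w\succ_i' y$ forces $u\succ_i y$, so $u\succ_i y\succ_i x$ and $x\ne u$. This route is also much shorter than the paper's, which argues by contradiction through a case analysis of $c(\{x',y\})$ and $c(\{x,y\})$ using ATT and SI before reaching $y\langle w\rangle x'$.
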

\begin{proof}
The proofs of the two statements are similar. We prove only the first.

If $x \in B$, then $x \gg w$ so by construction $x \succ_i^\prime w$ for $i \in \{1,2\}$. This establishes the claim. From now on, assume $x \in C$. 

We distinguish two cases, depending on whether $w$ is an intermediate in $C$. For the first case, suppose there are no $z,z^\prime \in C$ with $z \langle w \rangle z^\prime$. If $y \in B$ (and by assumption $x \in C$), the construction in the paragraph preceding the statement of Lemma~\ref{lemma3-5} sets $x \succ^\prime_1 w$ if and only if $x \succ_i y$, as desired. If instead $y \in C$, then the same construction and Lemma~\ref{lemma3-5} show that $y \succ_i^\prime w$ if and only if $x \succ_i^\prime w$, as desired.

For the second case, suppose there exist $z,z^\prime \in C$ with $z \langle w \rangle z^\prime$. By Lemma~\ref{lemma3-4}.(i), there exists $x^\prime \in C$ with $x \langle w \rangle x^\prime$. The construction preceding Lemma~\ref{lemma3-4} sets $x \succ_i^\prime w$ if and only if $x \succ_i x^\prime$. So, it suffices to prove that, if $x \succ_i y$ and $y \succ^\prime_i w$, then $x^\prime \succ_i x$ is impossible. By contradiction, assume $x^\prime \succ_i x$ (and $x \succ_{3-i} x^\prime$).
	
If $y \in B$ (i.e., $y \gg w$) then $c(\{x,y\}) \neq \{x\}$, or otherwise $x \gg y$ and PC would imply $x \gg w$ (i.e., $x \in B$). Moreover, because $x \succ_i y$, $c(\{x,y\}) \neq \{y\}$. Therefore, we must have $c(\{x,y\}) = \{x,y\}$ and $y \succ_{3-i} x$. Combined with the premise $x^\prime \succ_i x$ and $x \succ_{3-i} x^\prime$, the transitivity of $\succ_i, \succ_{3-i}$ on $A$ implies $x^\prime \succ_i x \succ_i y$ and $y \succ_{3-i} x \succ_{3-i} x^\prime$, so that $y \langle x \rangle x^\prime$ holds. Because $c(\{x,w\}) = \{x,w\}$, SI implies either $y \langle x \rangle w$ or $w \langle x \rangle x^\prime$. The former is not possible because $c(\{y,w\}) = \{y\}$, and the latter contradicts $x \langle w \rangle x^\prime$. For $y \in B$, a contradiction is reached.
	
Now, suppose $y \in C$; then $y \succ_i^\prime w$ if and only if $w \succ_{3-i}^\prime y$ by construction. Consider $c(x^\prime,y) = y$. We must have $c(\{ x^\prime,y \}) \ne \{y\}$, or otherwise $y \succ_i x^\prime$ which, together with $x^\prime \succ_i x$ and $x \succ_i y$, would violate transitivity on $A$. 

We claim that $c(\{x^\prime,y\}) \neq \{x^\prime\}$. Because $y \in C$, $c(\{w,y\}) = \{w,y\}$. Because $x \langle w \rangle x^\prime$, we have $c(\{x,w\}) = \{x,w\}$ and $c(\{x^\prime,w\}) = \{x^\prime,w\}$. Then ATT would imply $c(\{x,y,w\}) = \{x\}$ and $c(\{x^\prime,y,w\}) = \{x^\prime\}$, while SI would imply that either $x \langle w \rangle y$ or $y \langle w \rangle x^\prime$. The two implications are irreconcilable, and the claim holds.

Consider $c(\{x^\prime,y\}) = \{x^\prime,y\}$. Combining $x \succ_i y$ and $x^\prime \succ_i x$, transitivity yields $x^\prime \succ_i y$. We claim that $c(\{x,y\}) \ne \{x\}$. As before, $x \langle w \rangle x^\prime$ implies either $x \langle w \rangle y$ or $y \langle w \rangle x^\prime$. The former is not possible because $c(\{x,y\}) = \{x\}$; and the latter implies $c(\{y, w, x^\prime\}) = \{x^\prime,y\}$, which rules out $y \succ_i^\prime w$ because $x^\prime \succ_i y$.
		
Because $x \succ_i y$, $c(\{x,y\}) \ne \{y\}$. We are left with $c(\{x,y\}) = \{x,y\}$, that yields
$x^\prime \succ_i x \succ_i y$ and $y \succ_{3-i} x \succ_{3-i} x^\prime$, and thus $y \langle x \rangle x^\prime$. Because $c(\{x,w\}) = \{x,w\}$, SI implies either $y \langle x \rangle w$ or $w \langle x \rangle x^\prime$. The latter contradicts $x \langle w \rangle x^\prime$, so we assume $y \langle x \rangle w$. Meanwhile, by SI, $x \langle w \rangle x^\prime$ and $c(\{y,w\}) = \{y,w\}$ imply either $x \langle w \rangle y$ or $y \langle w \rangle x^\prime$. The former contradicts $y \langle x \rangle w$, so we maintain $y \langle w \rangle x^\prime$. Because $x^\prime \succ_i y$, our construction sets $x^\prime \succ_i^\prime w \succ_i^\prime y$ ruling out $y \succ_i^\prime w$, which is a contradiction. This exhausts all possibilities, proving the claim by contradiction.
\end{proof}
		
We proceed to prove the second lemma, covering the third and last case left out by Lemma~\ref{lemma3-6}.

\begin{lemma}\label{lemma3-7}
Under the same assumptions of Lemma~\ref{lemma3-6}, the following third statement holds:\\
\hspace*{2cm} 3) If $x \succ_i^\prime w$ and $w \succ_i^\prime y$, then $x \succ_i y$.
\end{lemma}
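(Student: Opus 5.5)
We need to prove: if $x \succ_i^\prime w$ and $w \succ_i^\prime y$ with $x,y\in A$, then $x \succ_i y$. We argue by contradiction: since $\succ_i$ is connected on $A$, the failure of the claim means $y \succ_i x$. We then split according to the position of $x$ and $y$ relative to the partition $A = B \cup C$ of~(\ref{eq:BandC}).

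First, $y \notin B$. If $y \in B$, then $y \gg w$, and the construction sets $y \succ_i^\prime w$ for both $i$, contradicting $w \succ_i^\prime y$. So $y \in C$. If $x \in B$, then $x \gg w$ while $y \| w$. From $y \succ_i x$ we get $c(\{x,y\}) \ne \{x\}$, and $y \gg x$ is impossible by PC, since it would give $y \gg w$. Hence $\{x,y\}$ is a tie with $y \succ_i x$ and $x \succ_{3-i} y$.

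\begin{itemize}
\item If $w$ is not an intermediate in $C$, the construction preceding Lemma~\ref{lemma3-5} orients every element of $C$ the same way relative to $w$, determined by which relation some $y' \in C$ uses to beat some $\xi \in B$. Here $y \in C$ beats $x \in B$ in $\succ_i$, so the rule forces $y \succ_i^\prime w$, a contradiction. Lemma~\ref{lemma3-5} guarantees that the rule was not forced the other way.
\item If $w$ is an intermediate, Lemma~\ref{lemma3-4} gives $y' \in C$ with $y \langle w \rangle y'$, and $w \succ_i^\prime y$ means $y' \succ_i y$. We then mimic the $y\in B$ argument in Lemma~\ref{lemma3-6}. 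The chains $x$ versus $y$ and $y$ versus $y'$ yield either $y \langle\cdot\rangle$ structures or dominances, and SI applied with the tie $\{y,w\}$ produces $x \langle y \rangle w$ or similar. This is impossible because $x \gg w$ forbids $\{x,w\}$ from being a tie.
\end{itemize}

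The main case is $x,y \in C$. If $w$ is not an intermediate in $C$, the construction gives all elements of $C$ the same orientation relative to $w$. Then $x \succ_i^\prime w$ and $w \succ_i^\prime y$ cannot both hold, since one says $C$ is above $w$ in $\succ_i'$ and the other says it is below. This is an immediate contradiction.

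If $w$ is an intermediate, Lemma~\ref{lemma3-4} supplies $x',y' \in C$ with $x \langle w \rangle x'$ and $y \langle w \rangle y'$. The hypotheses translate to $x \succ_i x'$ and $y' \succ_i y$, with the reverse holding in $\succ_{3-i}$. We would like to take $y' = x$ directly, which requires $x \langle w \rangle y$. The tie $\{x,y\}$ should follow: $x \gg y$ would give $x \succ_i y$ at once, and $y \gg x$ with $y \| w$ is excluded via ATT, exactly as in Lemma~\ref{lemma3-6}. Together with $y\succ_i x$ and $x\succ_{3-i} y$, the triplet $\{x,y,w\}$ is all ties. By EI it has an intermediate, so one of $x\langle w\rangle y$, $w \langle x \rangle y$, $x \langle y \rangle w$ holds.
\begin{itemize}
\item If $x \langle w \rangle y$, part (ii) of Lemma~\ref{lemma3-4} forces the orientations of $x$ and $y$ relative to $w$ to be inverse, which yields $x \succ_i y$, a contradiction.
\item If $w \langle x \rangle y$ (or symmetrically $x\langle y\rangle w$), we combine it with $x \langle w \rangle x'$ via SI, as in the last paragraph of the proof of Lemma~\ref{lemma3-6}. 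This gives $y \langle w \rangle x'$ or an incompatible intermediate, and the orientation rule then gives $x' \succ_i' w \succ_i' y$ together with $x \succ_i x'$, transitivity on $A$ contradicting $y\succ_i x$.
\end{itemize}

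I expect the main obstacle to be this final subcase. It requires carefully tracking which intermediates SI produces and checking consistency with the induction hypothesis $c = c_\succ$ on $A$, in the same way as the closing paragraphs of Lemma~\ref{lemma3-6}.
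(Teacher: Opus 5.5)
Your case structure is the paper's: first $y\notin B$, then $x\in B$ or $x\in C$, each split by whether $w$ is an intermediate in $C$. The two non-intermediate subcases go exactly as in the paper.

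\textbf{Gap in the last subcase} ($x,y\in C$, $w$ an intermediate). You pass through the tie $\{x,y\}$ and EI, and you exclude $y\gg x$ ``via ATT, exactly as in Lemma~\ref{lemma3-6}''. ATT does not do this work. From $y\gg x$ and the tie $\{y,w\}$ it only yields $c(\{x,y,w\})=\{y\}$, which is not by itself contradictory. Applying SI to $x\langle w\rangle x'$ with the tie $\{w,y\}$ does kill the branch $x\langle w\rangle y$, since $\{x,y\}$ is not a tie. The other branch, $x'\langle w\rangle y$, is not killed by any dominance, unlike in Lemma~\ref{lemma3-6}. It is killed only by the orientation rule: $w\succ_i' y$ forces $x'\succ_i y$, whereas $y\succ_i x\succ_i x'$.

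That argument never uses $y\gg x$, so it proves the lemma directly and makes both the tie and EI superfluous. This is what the paper does:
\begin{itemize}
\item If $y=x'$, you are done, since $x\succ_i x'$.
\item Otherwise SI gives $x\langle w\rangle y$ or $y\langle w\rangle x'$.
\item If $x\langle w\rangle y$, then Lemma~\ref{lemma3-4}(ii) with $x\succ_i x'$ yields $x\succ_i y$.
\item If $y\langle w\rangle x'$, then the construction gives $x'\succ_i y$ directly on $A$, and transitivity of $\succ_i$ on $A$ yields $x\succ_i y$. Do not obtain $x'\succ_i y$ from $x'\succ_i' w\succ_i' y$, as your wording suggests: that would be circular.
\end{itemize}
Your treatment of the case $w\langle x\rangle y$ already contains this reasoning. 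Applied at the outset, before any appeal to EI, it finishes the proof, and this subcase turns out to be the easy one.

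\textbf{Smaller point in the subcase $x\in B$ with $w$ an intermediate.} There are no ``dominances'' to consider. By the induction hypothesis, the chains $y'\succ_i y\succ_i x$ and $x\succ_{3-i} y\succ_{3-i} y'$ give exactly $x\langle y\rangle y'$. SI with the tie $\{y,w\}$ then gives $x\langle y\rangle w$ or $y'\langle y\rangle w$, and only the first is excluded by $x\gg w$. The second must be excluded by the premise $y\langle w\rangle y'$, since $c(\{y,w,y'\})$ cannot equal both $\{y,y'\}$ and $\{y',w\}$.
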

\begin{proof}
If $c(\{x,y\}) = \{x\}$, then we are done by definition of $\succ_i$. So, assume $c(\{x,y\}) \neq \{x\}$. Note that $w \succ_i^\prime y$ rules out $y \in B$: we must have $y \in C$. We distinguish two cases.

The first case has $x \in B$ and $y \in C$. We can rule out $c(\{x,y\}) = \{y\}$ because otherwise PC would imply $y \succ_i^\prime w$. Hence, we have $c(\{x,y\}) = \{x,y\}$. Assume by contradiction $y \succ_i x$ and thus, by the induction hypothesis, $x \succ_{3-i} y$.

Assume that there exists no $z \in C$ with $y \langle w \rangle z$. The construction in the paragraph preceding the statement of Lemma~\ref{lemma3-5} sets $y \succ^\prime_1 w$, contradicting the initial assumption $w \succ_i^\prime y$. 

Assume now that there exists $z \in C$ with $y \langle w \rangle z$. Because $w \succ_i^\prime y$, the construction preceding Lemma~\ref{lemma3-4} sets $z \succ_i y$ and $y \succ_{3-i} z$. Combined with $y \succ_i x$ and $x \succ_{3-i} y$, we have $z \succ_i y \succ_i x$ and $x \succ_{3-i} y \succ_{3-i} z$ by the transitivity of $\succ_i$ and $\succ_{3-i}$ entailed from our induction hypothesis; hence, $x \langle y \rangle z$. Because $c(\{y,w\}) = \{y,w\}$, SI implies either $x \langle y \rangle w$ or $w \langle y \rangle z$. The former is not possible because $x \in B$ and the latter is ruled out by the premise $y \langle w \rangle z$. Therefore, $y \succ_i x$ cannot hold and we get $x \succ_i y$.
			
The second case has $x \in C$ and, as noted earlier, $y \in C$. Suppose that $w$ is not revealed to be intermediate in $C$; that is, that there exist no $x^\prime,y^\prime \in C$ such that $x \langle w \rangle x^\prime$ and $y \langle w \rangle y^\prime$. (Here, $x^\prime$ could be $y$ and $y^\prime$ could be $x$.) Then the construction preceding Lemma~\ref{lemma3-5} implies that only one of $x \succ_i^\prime w$ and $w \succ_i^\prime y$ holds. So transitivity holds vacuously. 

Suppose now that $w$ is revealed to be intermediate in $C$, and there exist $x^\prime,y^\prime \in C$ such that $x \langle w \rangle x^\prime$ and $y \langle w \rangle y^\prime$. Because the lemma assumes $x \succ^\prime_{i} w$ and $w \succ^\prime_{i} y$, by construction we have $x \succ_i x^\prime$ and $y \succ_{3-i} y^\prime$. Note that $x^\prime = y$ and $y^\prime = x$ is possible. If $x^\prime =y$, we are done. If not, SI implies either $x \langle w \rangle y$ or $y \langle w \rangle x^\prime$. If $x \langle w \rangle y$, then $x \succ_{i} y$ by Lemma \ref{lemma3-4}.ii. If $y \langle w \rangle x^\prime$, then $w \succ_{i} y$ implies $x^\prime \succ_{i} y$ by construction. Because $x \succ_{i} x^\prime$, transitivity of $\succ_{i}$ yields $x \succ_{i} y$.
\end{proof}

We have inductively defined a pair $\succ^\prime = \left(\succ_1^\prime, \succ_2^\prime \right)$ of complete and transitive preferences. It remains to show that that this pair $\succ^\prime$ of orders rationalizes the choice correspondence $c$ under Borda ownership; that is, $c_{\succ^\prime}(A) = c(A)$ for any menu $A$. For accuracy, we use the notation $b_i (x, A; \succ)$ to pinpoint that Borda counts are taken with respect to an order $\succ$, .

\smallskip\noindent
\textit{Step 3.} We proceed by induction using the lexicographic order for adding alternatives defined above. The base case ($|A|=2$) is trivial. The inductive hypothesis is $c_{\succ^\prime}(T) = c(T)$ for all $T \subseteq A$. We choose $w \notin A$ following the lexicographic order and show that $c_{\succ^\prime}(T \cup \{w\}) = c(T \cup \{w\})$ for any $T$. Because $|c(T)| \leq 2$, we consider two cases: (i)~$|c(T)| = 1$, and (ii)~$|c(T)|=2$. Let $B$ and $C$ be defined as in Equation~(\ref{eq:BandC}).

\medskip\noindent
\textbf{Case (i): $|c(T)| = 1$.} We consider two subcases, depending on whether $\max(T, \succ_1) = \max(T,\succ_2)$, or not.

\smallskip\noindent
\textbf{Subcase (i.1) $x = \max(T, \succ_1) = \max(T, \succ_2)$}. If \emph{$x \in B$}, EFF implies $c(T \cup \{w\}) = \{x\}$ because $x \gg y$ for all $y \in T \cup \{w\}$; hence, by construction, $\max(T\cup \{w\}, \succ_1^\prime) = \max(T\cup \{w\},\succ_2^\prime) = \{x\}$ and we get $\{ x \} = c_{\succ^\prime}(T \cup \{w\})$. 

If \emph{$x \in C$}, then $c(\{x,w\}) = \{x,w\}$. Because of the lexicographic order, $c(\{w,y\}) \neq \{w\}$ for any $y \in T$ and thus $w$ is an extreme. Also, by assumption, $x = \max(T, \succ_i)$ for $i=1,2$: so, $x \gg y$ for any $y \ne x$ and $y \in T$. Starting with $\{ x,w \}$ and $c(\{ x,w \})= \{ x,w \}$, the repeated application of ATT yields $c(T \cup \{w\}) = \{x\}$. Returning to $c(\{x,w\}) = \{x,w\}$, this implies $x = \max(T\cup \{w\}, \succ_i^\prime)$ and $w = \max(T \cup \{w\}, \succ_{3-i}^\prime)$ for some $i \in \{1,2\}$. Then $y \succ_{i} w$ for all $y \in T$, or else we would have $w \gg y$ for some $y \in T$, violating the lexicographic order for adding alternatives. Recall $x = \max(T, \succ_{3-i})$: this yields $x \succ_{3-i} y$ for all $y \in T$, from which we get $b_{3-i}(x, T \cup \{w\}; \succ^\prime ) > b_{i }(w, T \cup \{w\}; \succ^\prime)$ and then $c_{\succ^\prime}(T \cup \{w\}) = \{ x \}$.

\smallskip\noindent
\textbf{Subcase (i.2) $x = \max(T, \succ_i) \ne \max(T, \succ_{3-i}) = y$.} Clearly, $x \succ_i y$ and $y \succ_{3-i} x$. Suppose $c_{\succ}(T) = \{x\}$. Then $\{x\} = c(T)$ by the induction hypothesis, with $b_{3-i}(x,T; \succ) > b_{i}(y, T; \succ)$. If \emph{$x \in B$}, then ATT (applied on $c(T)$ to $c(T \cup \{w\})$) implies $c(T \cup \{w\}) = \{x\}$; and, by construction, $x \succ_i^\prime w$ and $y \succ_{3-i}^\prime x \succ_{3-i}^\prime w$. Therefore, $\max(T \cup \{w\}, \succ_i^\prime) = x$ and $\max(T \cup \{w\}, \succ_{3-i}^\prime) = y$. Moreover, $b_{3-i}(x,T \cup \{w\}; \succ^\prime) > b_{i}(y, T \cup \{w\}; \succ^\prime)$. This implies $\{x\} = c_{\succ^\prime}(T \cup \{w\})$. If \emph{$x \in C$}, then by construction, $x \succ_j^\prime w$ and $w \succ_{3-j}^\prime x$ for some $j \in \{1,2\}$. We consider the two possibilities: either $j \neq i$ or $j = i$.

\smallskip\noindent
\textbf{Subcase (i.2.a): $j \neq i$.} Then $w \succ_i^\prime x$. Because $x = \max(T, \succ_i)$, the transitivity of $\succ_i^\prime$ implies $w \succ_{i}^\prime z$ for all $z \in T$. Given the lexicographic order for adding alternatives, we have $z \succ_{3-i}^\prime w$ for all $z \in T$ because $w$ is an extreme. In particular, $w \succ_{i}^\prime x \succ_{i}^\prime y$ and $y \succ_{3-i}^\prime x \succ_{3-i}^\prime w$, hence $c(\{y,w\})= \{y,w\}$. By definition of $\succ^\prime$, $w = \max(T \cup \{w\}, \succ_i^\prime)$ and $y = \max(T \cup \{w\}, \succ_{3-i}^\prime)$; hence $c_{\succ^\prime}(T \cup \{w\}) \subseteq \{y,w\}$. 

Starting from $c(\{y,w\}) = \{y,w\}$, we compute $c(T \cup \{w\})$ directly. We use the observation that, by the induction hypothesis and the definition of $\succ$, for every $z \in T \setminus \{y\}$---including $x$ itself---we have either $y\langle z \rangle w$ or $y \gg z$. 

If \emph{$c(\{y,z\}) = \{y\}$ for some $z \in T$}, the singleton choice implies a higher Borda count for $y$. Hence, we have $c_{\succ^\prime}(T \cup \{w\}) = \{y\}$. Partition $T$ into two sets $T_1$ and $T_2$, where $T_1$ collects the alternatives $z$ with $y\langle z \rangle w$ and $T_2$ collects the alternatives $z$ with $y \gg z$. By repeated application of EXP.ii (start from $\{y,w\}$ towards $T_1 \cup \{y,w\}$), we get $c(T_1 \cup \{y,w\})= \{y,w\}$. Recall that $y \gg z$ and $c(\{w,z\})= \{w,z\}$ for all $z \in T_2$; by repeated application of ATT (start from $T_1 \cup \{y,w\}$ towards $T \cup \{w\}$), we conclude that $c(T \cup \{w\})= \{y\}$. 

Now, suppose instead that \emph{$c(\{y,z\}) = \{y,z\}$ for all $z \in T$}, and so $y$ and $w$ have the same Borda count. Then $b_i(y,T \cup \{w\}; \succ^\prime) = b_{3-i}(w,T \cup \{w\}; \succ^\prime)$, and we have $c_{\succ^\prime}(T\cup \{w\})= \{y,w\}$. Because $c(\{y,w\}) = \{y,w\}$ and $y\langle z \rangle w$ for all $z \in T \setminus \{y\}$, by repeated application of EXP.ii (start from $\{y,w\}$ towards $T \cup \{y,w\}$), we get $c(T \cup \{w\}) = \{y,w\}$.

\smallskip\noindent
\textbf{Subcase (i.2.b): $j = i$,} that is, $x \succ_i^\prime w$ and $w \succ_{3-i}^\prime x$. There are two possibilities, depending on whether $w$ displaces $y$ from the top position in $\succ_{3-i}$.

If \emph{$w \succ_{3-i}^\prime y$}, then by transitivity $w \succ_{3-i}^\prime z$ for all $z \in T$. Then, because the lexicographic order implies that $w$ is an extreme, we have $z \succ_i^\prime w$ for all $z \in T$. Recall that Case (i) has $c(T)= \{x\}$: therefore, $b_{3-i}(x,T; \succ) > b_{i}(y,T; \succ)$ or else $c(T)$ would contain both $x$ and $y$. This says that $x$ cannot have the lowest rank in $\succ_{3-i}$: there exists some $z \in T$ such that $c(\{x,z\}) = \{x\}$, i.e.\ $x \gg z$. Given that $w$ has the lowest rank in $\succ_i$, we have $b_{3-i}(x,T \cup \{w\}; \succ^\prime) > b_{i}(w,T \cup \{w\}; \succ^\prime)$, and thus $c_{\succ^\prime}(T \cup \{w\}) = \{x\}$. Using similar arguments as in the first branch of Subcase (i.2.a), where $y$ had a higher Borda count, repeated applications of EXP.ii and ATT yield $c(T \cup \{w\})= \{x\}$.

If instead \emph{$y \succ_{3-i}^\prime w$}, then $x$ and $y$ remain top-ranked after the inclusion of $w$: $\max(T \cup \{w\},\succ_i^\prime) = x$ and $\max(T \cup \{w\},\succ_{3-i}^\prime) = y$. Suppose $y \in C$. By definition of $\succ^\prime$, $w \succ_{i}^\prime y$ and $x \langle w \rangle y$. Because $b_{3-i}(x,T; \succ) > b_{i}(y,T; \succ)$, we have $b_{3-i}(x,T \cup \{w\}; \succ^\prime) > b_{i}(y,T\cup \{w\}; \succ^\prime)$ and thus $c_{\succ^\prime}(T \cup \{w\}) = \{x\}$. Given that $x$ and $y$ are top-ranked, for each $z \in T \setminus \{x,y\}$ it has to be $x \gg z$ or $y \gg z$ or $x \langle {z} \rangle y$; so, by EFF and NBI, $z \notin c(T \cup \{w\})$. 

It remains to show that $y \notin c(T \cup \{w\})$. Partition $T$ into four sets $T_1, T_2, T_3$ and $T_4$, where $T_1$ collects the alternatives $z$ in $T$ with $x\langle z \rangle y$, $T_2$ collects $z$ such $x \gg z$ and $y \gg z$, $T_3$ collects $z$ such $x \gg z$ and $y || z$, and $T_4$ collects $z$ such $x || z$ and $y \gg z$. Start from $c(\{x,y\})=\{x,y\}$. By repeated application of EXP.ii, we get $c(\{x,y\} \cup T_1 \strut)= \{x,y\}$. From there, by repeated application of ATT, we get $c(\{x,y\} \cup T_1 \cup T_2 \strut) = \{x,y\}$. 

Because $b_{3-i}(x,T; \succ) > b_{i}(y,T; \succ)$, we have $|T_3| > |T_4|$ and therefore there exists a bijection $\sigma$ from $T_4$ to a strict subset $T^\prime_3$ of $T_3$. By repeated application of EXP.ii (to $z \in T_4$ and $\sigma (z) \in T^\prime_3$), we have $c(T \setminus (T_3 \setminus T^\prime_3)) = \{x,y\}$. Using EXP.ii again, we obtain $c\left( \left[T \setminus (T_3 \setminus T^\prime_3) \right] \cup \{w\} \strut \right) = \{x,y\}$. Finally, by repeated application of ATT on $T_3 \setminus \sigma (T_4)$, we conclude that $y \notin c(T \cup \{w\})$. Therefore, $c(T\cup\{w\})= \{x\}$.

Suppose now $y \in B$. By the induction hypothesis $b_{3-i}(x,T; \succ) > b_{i}(y, T;\ \succ)$. We consider two cases. If $b_{3-i}(x,T;\succ) = b_{i}(y,T;\succ)+1$, then $b_{3-i}(x,T \cup \{w\}; \succ^\prime) = b_{i}(y,T\cup \{w\} ; \succ^\prime)$, which implies $c_{\succ^\prime}(T \cup \{w\}) = \{x,y\}$. Moreover, $b_{3-i}(x,T; \succ) = b_{i}(y,T; \succ)+1$ implies that there exists $z \in T \setminus \{x,y\}$ such that $c(\{x,z\}) = \{x\}$ and $c(\{y,z\}) = \{y,z\}$. By our induction hypothesis, we have $c(T\setminus \{z\})=\{x,y\}$; by EXP.i we get $c(\left[T \setminus \{z\}\right] \cup \{w\}) = \{x,y\}$.

If \emph{$b_{3-i}(x,T,\succ) > b_{i}(y,T,\succ)+1$}, then $b_{3-i}(x,T \cup \{w\}, \succ^\prime) > b_{i}(y,T\cup \{w\}, \succ^\prime)$, which implies $c_{\succ^\prime}(T \cup \{w\}) = \{x\}$. An analogous argument from the above case $y \in C$ shows that $c(T \cup \{w\}) = \{x\}$. 

\smallskip\noindent
\textbf{Case (ii): $|c(T)| = 2$.} Let $c(T) = \{x,y\}$ and, without loss of generality, assume $x = \max(T, \succ_1)$ and $y = \max(T, \succ_2)$. Given that $x,y$ are both chosen, $b_2 (x, T; \succ) = b_1 (y,T; \succ)$.

\smallskip\noindent
\textbf{Subcase (ii.a)}: $x,y \in B$. Here, ATT implies $c(T \cup \{w\}) = \{x,y\}$; and, by construction of $\succ^\prime$, $\{x,y\} = c_{\succ^\prime} (T \cup \{w\})$.

\smallskip\noindent
\textbf{Subcase (ii.b)}: $x \in B, y \in C$. Here, ATT implies $c (T \cup \{w\}) = \{x\}$ and, by construction, $\{ x \} = c_{\succ^\prime}(T \cup \{w\})$. A symmetric argument holds if $x \in C$ and $y \in B$.

\smallskip\noindent
\textbf{Subcase (ii.c)}: $x,y \in C$. If $x \langle w \rangle y$ holds, then EXP.ii implies $c(T \cup \{w\}) = \{x,y\}$ and, by construction, $c_{\succ^\prime}(T \cup \{w\}) = \{x,y\}$. If instead $x \langle w \rangle y$ does not hold, then either $w \succ_1^\prime x$ or $w \succ_2^\prime y$ (but not both). Assume without loss of generality that $w$ displaces $y$ from its top position, so that $x \succ_1^\prime w$ and $w \succ_2^\prime y$. 

This is the same configuration of subcase (i.2.b): by transitivity, $w = \max(T \cup \{w\}, \succ_2^\prime)$ and $x = \max(T \cup \{w\}, \succ_1^\prime)$. Moreover, $z \succ_1^\prime w$ for every $z \in T$, or else $w \gg z$ would contradict the lexicographic order, which implies $b_1(w, T \cup \{w\}; \succ^\prime) = 0$. On the other hand, $b_2(x, T \cup \{w\}; \succ^\prime) = b_2(x,T; \succ^\prime)$ because $w \succ_2^\prime x$. The argument of subcase (i.2.b) applies, with one difference: there we had $c(T) = \{x\}$, and hence $b_2(x,T; \succ) > b_1(y,T; \succ) \geq 0$, so the comparison was strict; here $c(T) = \{x,y\}$ gives $b_2(x,T; \succ) = b_1(y,T; \succ)$, which is a tie. Two possibilities arise.

If $b_2(x,T; \succ) > 0$, then $b_2(x,T \cup \{w\}; \succ^\prime) > b_1(w,T \cup \{w\}; \succ^\prime)$ and $c_{\succ^\prime}(T \cup \{w\}) = \{x\}$. We show that $c(T \cup \{w\}) = \{x\}$. Note that every $z \in T \setminus \{x\}$ satisfies either $x \gg z$ or $x \langle z \rangle w$. Let $T_1 = \{z \in T \setminus \{x\} : x \langle z \rangle w\}$ and $T_2 = \{z \in T \setminus \{x\} : x \gg z\}$. The set $T_2$ is non-empty because $b_2(x,T) > 0$. Since $c(\{x,w\}) = \{x,w\}$, repeated application of EXP.ii gives $c(\{x,w\} \cup T_1) = \{x,w\} $ and repeated application of ATT (start from $T_1 \cup \{x,w\}$ towards $T \cup \{w\}$) gives $c(T \cup \{w\}) = \{x\}$.

If $b_2(x,T; \succ) = 0$, then $b_2(x,T \cup \{w\}; \succ^\prime) = b_1(w,T \cup \{w\}; \succ^\prime) = 0$ and thus $c_{\succ^\prime}(T \cup \{w\}) = \{x,w\}$. We show that $c(T \cup \{w\}) = \{x,w\}$. There exists no $z \in T$ such that $x \gg z$, so we have $x \langle z \rangle w$ for every $z \in T \setminus \{x\}$. Thus, starting from $c(\{x,w\}) = \{x,w\}$, repeated application of  EXP.ii yields $c(T \cup \{w\}) = \{x,w\}$.
\end{proof}

\section{\sc \textbf{Independence of the axioms}}
\label{app:independence}

This section collects examples that show the independence of axioms used in Theorem~\ref{thm2} and in Theorem~\ref{thm3}. For brevity, we may write $xy$ for the menu $\{x,y\}$, and so on.

\subsection{Independence of the axioms of Theorem~2}
\label{app:thm2}

\emph{(PC without AP)} Let $X=\{a,b,x,y\}$. Assume that all binary choices match the linear order $a\succ b\succ x\succ y$; then PC holds. Let $c( axy )=x$ and $c(bxy)=y$; and, for any other non-binary menu, let $c(A)$ be the $\succ$-maximum. Because $c(ax) = a$ and $c(axy) = x$, $x \vartriangleright y$; similarly, because $c(by) = b$ and $c(bxy) = y$, $y \vartriangleright x$. Then AP is violated.

\medskip
\noindent
\emph{(AP without PC)} Let $X=\{x,y,z\}$. Let $c(xy)=x$, $c(yz)=y$, $c(xz)=z$ and $c(xyz)=x$. The only 3-element menu is $\{x,y,z\}$; along with $c(xz) = z$, $c(xyz) = x$ implies $x \vartriangleright y$ and $x \vartriangleright z$. These are the only two prevalence relations: AP holds. Binary choices cycle: PC is violated. 

\subsection{Independence of the axioms of Theorem~3}
\label{app:thm3}

Recall that a pair $\{x,y\}$ is a \emph{tie} if $c(x,y)=\{x,y\}$, and a menu is \emph{all ties} if the choice from every pair of its elements is a tie. Moreover, $x \gg y$ if $c(xy)=\{x\}$ and $x \langle y \rangle z$ if $\{x,y,z\}$ is all ties and $c(xyz)=\{x,z\}$. Note that an intermediate exists only if $\{x,y,z\}$ is all ties.

In each example below, all axioms hold except one: we label each example by the violated axiom. 

\medskip\noindent
\textbf{(PC)} Assume $X=\{x,y,z\}$ with $c(xy)=\{y\}$, $c(yz)=\{z\}$, $c(xz)=\{x,z\}$, and $c(xyz)=\{z\}$. Because $z \gg y \gg x$ but $z \not\gg x$, PC is violated. ATT holds, because the only binding instance is the addition of $y$ to $A=\{ x, z\}$, which matches $c(xyz)=\{z\}$. EFF holds, because $z$ is undominated. Because $\{x,y,z\}$ is not all ties, there are no intermediates; hence, EI, NBI, SI and EXP.ii are vacuously true. Similarly, because there are no four distinct alternatives, EXP.i is vacuously true.
	
\medskip\noindent
\textbf{(ATT)} Let $X=\{x,y,z\}$ with $c(xy)=\{x\}$, $c(yz)=\{y,z\}$, $c(xz)=\{x,z\}$, and $c(xyz)=\{x,z\}$. Because $x \gg y$ and $z \not\gg y$, ATT requires that the choice after adding $y$ to the menu $xz$ is only $x$, yet $z$ survives: ATT is violated. All other axioms hold or are vacuously true exactly as in the previous example.

\medskip\noindent
\textbf{(EFF)} Let $X=\{x,y,z,w\}$ with $c(xy)=\{x\}$, $c(zw)=\{z\}$; all other pairs are ties; $c(xyz)=c(xyw)=\{x\}$, $c(xzw)=c(yzw)=\{z\}$; and $c(X)=\{x,y,z\}$. Because $x \gg y$ but $y \in c(X)$, EFF is violated. PC holds trivially. The only two  binding instances for ATT are $\{xz \} \cup \{y\}$ and  $\{ wz \} \cup \{y\}$: they both prescribe choices that match $c ( \cdot)$. No triplet is all ties, so EI, NBI, SI and EXP.ii are vacuously true. The only binding instance for EXP.i is $A=\{x,z\}$ with $y,w \notin A$, that prescribes $x,z \in c(xywz)$. 
	
\medskip\noindent
\textbf{(EI)} Let $X=\{x,y,z\}$ with all pairs tied and $c(xyz)=\{x,y,z\}$. The choice for the all ties triplet is not a doubleton, so EI is violated. Moreover, there exists no intermediate so NBI, SI and EXP.ii are vacuously true. Finally, there is no dominance, so PC, ATT, EFF and EXP.i are also vacuously true.

\medskip\noindent
\textbf{(NBI)} Let $X=\{x,y,z,w\}$ be an all ties, with intermediates over triplets described by $x \gtrdot y \gtrdot w \gtrdot z$ (namely, $c(xyz)=\{x,z\}$, $c(xyw)=\{x,w\}$, $c(ywz)=\{y,z\}$, and $c(xwz)=\{x,z\}$) and $c(X)=\{x,y,z\}$. Because $x\langle y \rangle z$ but $y \in c(X)\supseteq \{x,z\}$, NBI fails. There is no dominance relation so PC, ATT, EFF and EXP.i are vacuously true. EI is trivially satisfied. SI holds because the intermediates are described by a linear order. EXP.ii requires $x,z \in c(X)$ (via the triplets $xyz$ or $xwz$); $x,w \in c(xyw)$, and $y,z \in 	c(ywz)$, all of which hold. 
	
\medskip\noindent
\textbf{(SI)} Let $X=\{x,y,z,w\}$ with all pairs tied except $c(xw)=\{x\}$ and $c(wz)=\{z\}$; $c(xyz)=\{x,z\}$, $c(xyw)=\{x\}$, $c(ywz)=\{z\}$, $c(xwz)=\{x,z\}$; and $c(X)=\{x,z\}$. The only intermediate is $x\langle y \rangle z$; because $c(yw) = \{y,w\}$, SI requires either $x\langle y \rangle w$ or $w \langle y \rangle z$; but neither is possible because $\{x,w\}$ and $\{w,z\}$ are not ties. Hence, SI is violated. The only dominance relations are $x \gg w$ and $z \gg w$, so PC holds trivially. The only binding instances for ATT arise when adding $w$ to a tied menu, and choices match what is required. EI holds because the only all ties triplet is $\{x,y,z\}$, and $x \langle y \rangle z$. NBI requires $y \notin c(A)$ for $A \supseteq\{x,z\}$, which holds. 
EXP.i has no binding instance. The only binding instances for EXP.ii require $\{xz\}\subseteq c(xyz) \cap c(X)$, which holds.
	
\medskip\noindent
\textbf{(EXP.i)} Let $X=\{x,y,w,z\}$ with all pairs tied except $c(xz)=\{x\}$ and, $c(yw)=\{y\}$; $c(xyz)=\{x\}$, $c(xyw)=\{y\}$, $c(xwz)=\{x\}$, $c(ywz)=\{y\}$; and $c(X)=\{x\}$. Taking $A=\{ x, y \}$ and the pair $\{w,z\}$, the premise of EXP.i is satisfied, but $y \notin c(X)$: EXP.i is violated. The only dominance relations are $x \gg z$ and $y \gg w$, so PC holds trivially. There are no binding instances for ATT and EFF, so they are vacuously true. Similarly, because there are no intermediates, EI, NBI, SI and EXP.ii are vacuously true.

\medskip\noindent
\textbf{(EXP.ii)} Let $X=\{x,y,z,w\}$ with all pairs tied. Fix the order $x\gtrdot y\gtrdot z\gtrdot w$ and let choice over each triple give an intermediate that is the median according to $\gtrdot$; e.g., $c(\{x,z,w\})=\{x,w\}$, so $x\langle z\rangle w$. Finally, set $c(X)=\{y,z\}$. For $A=\{x,z,w\}$, we have $c(A)=\{x,w\}$; because $y \notin c(A)$ and $x \langle y \rangle w$ holds, EXP.ii requires $x,w \in c(A\cup\{y\})=c(X)$, which is violated by $c(X)=\{y,z\}$. Since every pair is tied, $\gg$ never holds so PC, ATT, EFF, and EXP.i are vacuously true. EI holds by construction. SI holds because each intermediate reappears exactly as predicted (there are four instances to check). NBI holds because no triple has $\{y,z\}$ as a pair of extremes, so nothing conflicts with $c(X)=\{y,z\}$.

% ======================================================================
% End of appendix
% ======================================================================

\setlength{\bibsep}{0.1 cm} 
\bibliographystyle{ecta} 
\bibliography{bibliog.bib} 	
	
\end{document}